\newtheorem{theorem}{\bf{Theorem}}
\newtheorem{lemma}{\bf{Lemma}}
\newtheorem{definition}{\bf{Definition}}
\newtheorem{remark}{\bf{Remark}}
\begin{document}

\title{\huge{Construction Methods Based on Minimum Weight Distribution for Polar Codes with Successive Cancellation List Decoding}}
\author{Jinnan Piao, Dong Li, Jindi Liu, Xueting Yu, Zhibo Li, Ming Yang, and Peng Zeng
\thanks{
This work is supported in part by the National Key R\&D Program of China under Grant 2022YFB3207400, in part by the National Natural Science Foundation of China under Grant 62201562 and 92267301, in part by the Liaoning Provincial Natural Science Foundation of China under Grant 2024--BSBA--51, and in part by the Fundamental Research Project of SIA under Grant 2022JC1K08. (\emph{Corresponding Author: Dong Li})
}
\thanks{
The authors are with the State Key Laboratory of Robotics, Shenyang Institute of Automation, Chinese Academy of Sciences, Shenyang 110016, China, and with the Key Laboratory of Networked Control Systems,  Chinese Academy of Sciences, Shenyang 110016, China.
(e-mail: piaojinnan@sia.cn; lidong@sia.cn; liujindi@sia.cn; yuxueting@sia.cn; lizhibo@sia.cn; yangming@sia.cn; zp@sia.cn).
}
}

\maketitle
\begin{abstract}

Minimum weight distribution (MWD) is an important metric to calculate the first term of union bound called minimum weight union bound (MWUB). In this paper, we first prove the maximum likelihood (ML) performance approaches MWUB as signal-to-noise ratio (SNR) goes to infinity and provide the deviation when MWD and SNR are given. Then, we propose a nested reliability sequence, namely MWD sequence, to construct polar codes independently of channel information. In the sequence, synthetic channels are sorted by partial MWD which is used to evaluate the influence of information bit on MWD and we prove the MWD sequence is the optimum sequence evaluated by MWUB for polar codes obeying partial order. Finally, we introduce an entropy constraint to establish a relationship between list size and MWUB and propose a heuristic construction method named entropy constraint bit-swapping (ECBS) algorithm, where we initialize information set by the MWD sequence and gradually swap information bit and frozen bit to satisfy the entropy constraint. The simulation results show the MWD sequence is more suitable for constructing polar codes with short code length than the polar sequence in 5G and the ECBS algorithm can improve MWD to show better performance as list size increases.

\end{abstract}

\begin{IEEEkeywords}
Polar codes, sequence construction, minimum weight distribution, entropy constraint, successive cancellation list decoding.
\end{IEEEkeywords}

\IEEEpeerreviewmaketitle

\section{Introduction}

\subsection{Relative Research}

\IEEEPARstart{P}{olar} codes, invented by Ar{\i}kan, have been proved to achieve the capacity of arbitrary binary-input discrete memoryless channels (B-DMCs) as the code length goes to infinity with the successive cancellation (SC) decoding \cite{arikan}.
However, the performance of polar codes is weak at short to moderate code lengths under SC decoding. To improve the block error rate (BLER) performance, the successive cancellation list (SCL) decoding \cite{niuscl,talvardyscl} is proposed.
Since polar codes demonstrate advantages in error performance and other attractive application prospects, they have been adopted as the coding scheme for the control channel of the enhanced Mobile Broadband (eMBB) service category in the fifth generation wireless communication systems (5G) \cite{3GPP_5G_polar}.

According to the polarization effect, the original channels of polar codes are transformed into synthetic channels
with different reliability. The construction of polar codes is selecting the most reliable synthetic channels to transmit information bits and others to transmit frozen bits.
The widely used design principle is to minimize the BLER under the SC decoding. The Bhattacharyya parameter \cite{arikan} is the first construction method to precisely calculate the mutual information of each synthetic channel in the binary erasure channel (BEC).
For other B-DMCs, the density evolution (DE) algorithm, initially proposed in \cite{DE} and improved in \cite{TalVardy}, tracks the probability distribution of the logarithmic likelihood ratio (LLR) of each synthetic channel and provides theoretical guarantee on the estimation accuracy with a high computational cost.
For the binary-input additive white Gaussian noise (BI-AWGN) channel, the Gaussian approximation (GA) algorithm \cite{GA,GA_DAI} approximates the probability distribution of the LLR as Gaussian distribution and gives reliability evaluation with limited complexity.
For fading channels, the polar spectrum \cite{PolarSpectrum, PolarSpectrumFastFading} is proposed to derive the upper bound of the error probability of the synthetic channels to construct polar codes.
Generally, the above algorithms are known as channel-dependent construction, since the reliabilities of synthetic channels are derived from the original channel parameters.

From the viewpoint of system design, the construction should be independent of channel conditions to facilitate the practical application of polar codes. Sch{\"u}rch \cite{PartialOrder} first finds the partial reliability order is invariant in any B-DMC and proposes the concept of partial order (PO).
Then, He \emph{et al.} \cite{PW} propose the polarization weight (PW) algorithm exploiting the index expansion of synthetic channels to design a universal reliability sequence without the channel parameters. However, the PW sequence has performance loss for polar codes with the code length $N \ge 1024$.
In 5G, polar codes are constructed by a fixed universal polar sequence \cite{5GDesignPolar} for all the code configurations by computer searching with the maximum polar sequence length $1024$.

To improve the performance of polar codes under SCL decoding,
several construction algorithms based on machine learning \cite{ConstructionAI,ConstructionGenetic,ConstructionRL} are first proposed and have remarkable performance with extreme complexity.
Then, Mondelli \emph{et al.} \cite{RMpolar} show that the performance of polar codes with SCL decoding depends on its maximum likelihood (ML) performance.
To minimize the ML performance of short precoded polar codes,
Miloslavskaya \emph{et al.} \cite{DesShortPolarSCL, RecDesPolarSCL} propose a construction method under SCL decoding to optimize the precoding matrix with a large number of code search.

A method to evaluate the ML performance is calculating the union bound by the distance spectrum \cite{LinShuBook}.
Nevertheless, enumerating the distance spectrum has exponential complexity and it is almost impossible for long code length.
In the high signal-to-noise ratio (SNR) region, the first term of union bound calculated by the minimum weight distribution (MWD) is dominant in the union bound \cite{CRCdesign}, namely minimum weight union bound (MWUB) in this paper.
To analyze the MWD of polar codes, the SCL based methods \cite{ADSCL,dsliu,CRCdesign} with excessively large list size are used to enumerate the codewords. Then, a polynomial-time method is proposed in \cite{calculate_MWD} to calculate the MWD of polar codes obeying the PO.
For any construction of polar codes, the sphere constraint based enumeration methods \cite{sphereMWD} are proposed to analyze the MWD.
The formation of polar/polarization-adjusted convolutional (PAC) \cite{PAC} codes is proposed in \cite{NondreasingCodeMWD} to calculate MWD.
The average distance spectrum of pre-transformed polar codes is researched in \cite{AWS1, AWS2}.
The MWD is also used to optimize the polar-like codes to improve the BLER performance of SCL decoding \cite{NondreasingCodeMWD, PolarOpt0025, yuan2019polar}. Thus, MWD is a critical metric to evaluate the ML decoding performance and optimize polar codes under SCL decoding.

\subsection{Motivation}

The SCL decoding is widely used for polar codes and its performance can approach the ML performance with limited list size \cite{talvardyscl}. Meanwhile, the MWD is an effective metric to evaluate ML performance. Thus, in this paper, we focus on the construction methods based on MWD to improve the performance of polar codes under SCL decoding.

\subsection{Main Contributions}

The main contributions of this paper are summarized as follows.

\begin{enumerate}
  \item We show the ML performance approaches the MWUB as the SNR goes to infinity by proving that both the upper bound and the lower bound of the ML performance approach the MWUB.
      Then, we derive the normalized deviation of the MWUB from the ML performance when the MWD and the SNR are given by introducing a looser union bound which concentrates the full distance spectrum on the small weight distribution.

  \item We introduce a new concept on the synthetic channels of polar codes, named partial MWD, which is used to evaluate the influence of each synthetic channel on the MWD when the information bit is transmitted in the synthetic channel.
      Then, based on the partial MWD, we order the synthetic channels and obtain a nested construction sequence independently of channel information, called MWD sequence.
      Finally, we prove that the MWD sequence is the optimum sequence evaluated by MWUB for polar codes obeying PO.

  \item A heuristic and greedy entropy constraint bit-swapping (ECBS) algorithm is proposed to improve the performance of polar codes under SCL decoding with limited list size.
      To design the ECBS algorithm, we establish a relationship between the list size and the MWUB by the entropies of the synthetic channels transmitting information bits, namely entropy constraint.
      In the ECBS algorithm, the information set is initialized by the MWD sequence. Then, we set the swapping range by the partial MWD and swap information bit and frozen bit greedily to make the information set satisfy the entropy constraint.

\end{enumerate}

The simulation results show that the proposed MWD sequence is suitable for constructing polar codes for short code length and has about $0.8$dB performance gain for code length $256$ and list size $16$ at code rate $0.5$ and BLER $10^{-3}$ compared with the polar sequence in 5G.
Then, the $\left(128, 64\right)$ polar code constructed by MWD sequence with $6$-bit CRC and list size $128$ has $0.22$dB performance gap at BLER $10^{-4}$ compared with the finite blocklength capacity.
Finally, the simulation results show that the performance of polar codes constructed by the proposed ECBS algorithm approaches the MWUB when the entropy constraint is satisfied.

The remainder of the paper is organized as follows. Section
II describes the preliminaries of polar codes, MWD and SCL decoding. The properties of MWUB are shown in Section III.
In Section IV, we introduce the concept of partial MWD and propose the MWD sequence. The ECBS algorithm is proposed in Section V. Section VI shows the performance of polar codes constructed by the MWD sequence and the ECBS algorithm. Section VII concludes this paper.

\section{Notations and Preliminaries}

\subsection{Notation Conventions}

In this paper, the lowercase letters, e.g., $x$, are used to denote scalars. The bold lowercase letters, e.g., ${\bf{x}}$, are used to denote vectors.
Notation ${{\bf x}_i^j}$ denotes the subvector $(x_i,\cdots,x_j)$ and $x_i$ denotes the $i$-th element of ${\bf{x}}$.
The sets are denoted by calligraphic characters, e.g., $\cal{X}$, and the notation $|\cal{X}|$ denotes the cardinality of $\cal{X}$.
The bold capital letters, e.g., $\bf{X}$, are used to denote matrices.
Furthermore, we write ${\bf{F}}^{\otimes n}$ to denote the $n$-th Kronecker power of $\bf{F}$.
Throughout this paper, $\bf 0$ and $\bf 1$ mean an all-zero vector and an all-one vector, respectively, and $\log \left(  \cdot  \right)$ means ``base 2 logarithm''.

\subsection{Polar Codes}

Polar codes depend on the polarization effect \cite{arikan} of the matrix
${{\bf F} = \left[
\begin{smallmatrix}
1&0\\
1&1
\end{smallmatrix}
\right]}$.
For an $(N,K)$ polar code with code length $N = 2^n$ and code rate $R = K/N$, the polarization effect generates $N$ synthetic channels $W_N^{\left(i\right)}, i = 0,1,\cdots,N-1$.
Each synthetic channel has different reliability $U\left(W_N^{\left(i\right)}\right)$ and the information bits are transmitted in the $K$ most reliable synthetic channels.
Hence, the information set of polar codes defined by ${\cal A}$ with cardinality $|{\cal A}|=K$ is composed of the indices of the $K$ most reliable synthetic channels. Then, the frozen set ${\cal A}^c$ with cardinality $|{\cal A}^c|=N-K$ is a complementary set of ${\cal A}$.
The codeword ${\bf c}$ of the polar code is calculated by ${\bf c}  = {\bf u}{\bf G}$, where ${\bf u}$ is an $N$-length information sequence and ${\bf G}$ is ${\bf F}^{\otimes n}$.
The information sequence ${\bf u}$ is generated by assigning $u_i$ to information bit if $i \in {\cal A}$, and assigning $u_i$ to $0$ if $i \in {\cal A}^c$.

Without loss of generality, the AWGN channel and the binary phase shift keying (BPSK) modulation are considered in this paper. Thus, each coded bit $c_i \in \left\{0,1\right\}$ is modulated into the transmitted signal by $s_i = 1 - 2c_i$. Then, the received sequence is ${\bf y} = {\bf s} + {\bf n}$, where $n_i$ is i.i.d. AWGN with zero mean and variance $\sigma^2$.

\subsection{Minimum Weight Distribution}

The distance spectrum of an $(N, K)$ binary linear block code, denoted by $A_d$, is the number of codewords of the code with the Hamming weight $d$. The pairwise error probability between two codewords modulated by BPSK differing in $d$ positions and coherently detected in the AWGN channel is $Q\left(\sqrt{\frac{{2dRE_b}}{{{N_0}}}}\right)$, where $E_b$ is the energy of the transmitted bit, $N_0$ is the one-sided power spectral density of AWGN and
\begin{equation}\label{Q_function}
Q(x)=\frac{1}{{\sqrt {2\pi }  }}\int_x^\infty  {{e^{ - \frac{{{t^2}}}{2}}}dt}
\end{equation}
is the probability that a Gaussian random variable with zero mean and unit variance exceeds the value $x$. Assuming that an all-zero codeword $\bf 0$ is transmitted, the union bound \cite{LinShuBook} of ML performance is
\begin{equation}\label{union_bound}
P_e
\le \sum\limits_{d = {d_{\min }}}^N {{A_d}Q\left( {\sqrt {\frac{{2dR{E_b}}}{{{N_0}}}} } \right)} .
\end{equation}

Then, since the MWD (i.e., $d_{\min}$ and $A_{d_{\min}}$) is the main factor influencing the ML performance in the high SNR region,
(\ref{union_bound}) is approximated as the first term of union bound \cite{LinShuBook, CRCdesign,sphereMWD,NondreasingCodeMWD}, i.e.,
\begin{equation}\label{union_bound_a}
P_e \approx {{A_{d_{\min}}}Q\left( {\sqrt {\frac{{2{d_{\min}}R{E_b}}}{{{N_0}}}} } \right)},
\end{equation}
where $d_{\min}$ is the minimum Hamming weight of the linear block code and ${A_{d_{\min}}}$ is the number of the codewords with $d_{\min}$.
The right-hand side (RHS) of \eqref{union_bound_a} is called MWUB in this paper.
Rigorously, \eqref{union_bound_a} cannot indicate that the ML performance approaches the MWUB as the SNR goes to infinity.

\subsection{Successive Cancellation List Decoding}

Polar codes can be decoded by the SC decoding algorithm with the decoding complexity $O(N\log N)$ \cite{arikan}.
Although SC decoding can achieve the channel capacity with the infinite code length, its performance is unsatisfactory for short or medium code lengths.

Thus, the SCL decoding is proposed in \cite{talvardyscl}, which recursively computes $P\left({\widehat {\bf u}}_0^i | {\bf y} \right)$ for $i=0,1,\cdots,N-1$ with the process similar to the SC decoding except keeping at most $L$ survival paths.
Let ${\mathcal U}_{i} \subseteq \left\{0,1\right\}^{i+1}, i = 0,\cdots, N-1$ be the subset including the $L$ survival paths at the $i$-th decoding step with the path metric

\begin{equation}
P\left({\widehat {\bf u}}_0^i | {\bf y} \right) = P\left({\widehat {\bf u}}_0^{i-1} | {\bf y} \right)P\left({{\widehat u}}_i | {\bf y}, {\widehat {\bf u}}_0^{i-1} \right).
\end{equation}
When $u_{i+1}$ is an information bit, the $L$ survival paths in ${\mathcal U}_{i}$ are split into $2L$ paths with attempting calculating
$P\left({{\widehat u}}_{i+1} = 1 | {\bf y}, {\widehat {\bf u}}_0^{i} \right)$ and
$P\left({{\widehat u}}_{i+1} = 0 | {\bf y}, {\widehat {\bf u}}_0^{i} \right)$.
Then, ${\mathcal U}_{i+1}$ is decided by selecting the $L$ most likely paths with larger $P\left({\widehat {\bf u}}_0^{i+1} | {\bf y} \right)$ from the $2L$ split paths. When $u_{i+1}$ is frozen bit, the $L$ survival paths in ${\mathcal U}_{i}$ are simply extended with the correct frozen bit.

\section{Properties of MWUB}\label{SectionMWUB}

In this section, we derive the ML performance approaches the MWUB as the SNR goes to infinity and provide the normalized deviation of the MWUB from the ML performance.

To derive the limit of ML performance, we first introduce the lower bound \cite[eq. (29)]{KouniaBound} as
\begin{equation}\label{KB}
\begin{aligned}
P_e \ge& b^*Q\left( {\sqrt {\frac{{2{d_{\min}}R{E_b}}}{{{N_0}}}} } \right) - \\ &\frac{b^*\left(b^*-1\right)}{2}
\Psi\left(\frac{1}{2},\sqrt {\frac{{2{d_{\min}}R{E_b}}}{{{N_0}}}},\sqrt {\frac{{2{d_{\min}}R{E_b}}}{{{N_0}}}}\right),
\end{aligned}
\end{equation}
where
\begin{equation}
b^* = \min\left\{\frac{1}{2}+\frac{Q\left( {\sqrt {\frac{{2{d_{\min}}R{E_b}}}{{{N_0}}}} } \right)}{\Psi\left(\frac{1}{2},\sqrt {\frac{{2{d_{\min}}R{E_b}}}{{{N_0}}}},\sqrt {\frac{{2{d_{\min}}R{E_b}}}{{{N_0}}}}\right)}, A_{d_{\min}}\right\}
\end{equation}
and
\begin{equation}\label{Eq2DgaussianPro}
\begin{aligned}
\Psi\left(\rho,x',y'\right)
= \frac{1}{2\pi\sqrt{1-\rho^2}} \int\limits_{x'}^{\infty}{\int\limits_{y'}^{\infty}{e^{-\frac{x^2-2\rho xy+y^2}{2\left(1-\rho^2\right)}}}}dxdy.
\end{aligned}
\end{equation}

With the lower bound \eqref{KB} and the union bound \eqref{union_bound}, the limit of ML performance is provided in Lemma \ref{LemmaMLlimit}.
\begin{lemma}\label{LemmaMLlimit}
As $\frac{E_b}{N_0}$ goes to infinity, $P_e$ approaches the MWUB, i.e.,
\begin{equation}
\lim_{\frac{E_b}{N_0}\rightarrow\infty}P_e = A_{d_{\min}}Q\left( {\sqrt {\frac{{2{d_{\min}}R{E_b}}}{{{N_0}}}} } \right).
\end{equation}
\end{lemma}

\begin{proof}
The proof is equivalent to both the lower bound \eqref{KB} and the union bound \eqref{union_bound} approaching the MWUB as $\frac{E_b}{N_0}\rightarrow\infty$.

For the lower bound, given two random variables ${\mathsf X} \sim {\mathcal N}(0,1)$ and ${\mathsf Y} \sim {\mathcal N}(0,1)$ and the correlation coefficient $\rho$, the probability density function of a 2-dimensional Gaussian distribution is
\begin{equation}
p\left(x, y\right) = \frac{1}{2\pi\sqrt{1-\rho^2}} e^{-\frac{x^2-2\rho xy+y^2}{2\left(1-\rho^2\right)}}
\end{equation}
Then, we transform \eqref{Eq2DgaussianPro} into
\begin{equation}\label{Eq2DgaussianProTransform}
\begin{aligned}
&\Psi\left(\rho,x',y'\right)
= \int\limits_{x'}^{\infty}{\int\limits_{y'}^{\infty}{p\left(x, y\right)}}dxdy\\
&=\int\limits_{x'}^{\infty}\left(\int\limits_{-\infty}^{\infty}{p\left(x, y\right)}dy\right)dx - \int\limits_{x'}^{\infty}{\int\limits_{-\infty}^{y'}{p\left(x, y\right)}}dxdy \\
&=\int\limits_{x'}^{\infty}p\left(x\right)dx - \Xi\left(\rho,x',y'\right)\\
&=Q\left(x'\right) - \Xi\left(\rho,x',y'\right),
\end{aligned}
\end{equation}
where
\begin{equation}
\Xi\left(\rho,x',y'\right) = \int\limits_{x'}^{\infty}{\int\limits_{-\infty}^{y'}{p\left(x, y\right)}}dxdy.
\end{equation}

Given $X_{d_{\min}} = {\sqrt {\frac{{2{d_{\min}}R{E_b}}}{{{N_0}}}} }$, we have
\begin{equation}\label{EqMLLowerBoundLimit}
\begin{aligned}
&\lim_{X_{d_{\min}}\rightarrow\infty}\left|\frac{b^*Q\left( X_{d_{\min}} \right) - \frac{b^*\left(b^*-1\right)}{2}
\Psi\left(\frac{1}{2},X_{d_{\min}},X_{d_{\min}}\right)}{{A_{d_{\min}}}Q\left( X_{d_m} \right)}\right| \\
&=\left|1 - \frac{A_{d_{\min}} - 1}{2Q\left( X_{d_{\min}} \right)}\left(Q\left( X_{d_{\min}} \right) - \Xi\left(\frac{1}{2},X_{d_{\min}},X_{d_{\min}}\right)  \right)\right| \\
&=\left|1 - \frac{A_{d_{\min}} - 1}{2Q\left( X_{d_{\min}} \right)}\left(Q\left( X_{d_{\min}} \right) - Q\left( X_{d_{\min}} \right)  \right)\right| = 1.
\end{aligned}
\end{equation}
Hence, the RHS of \eqref{KB} approaches the MWUB as $\frac{E_b}{N_0}\rightarrow\infty$.

For the union bound, we introduce the limit of $Q\left(x\right)$ as
\begin{equation}
\lim_{x\rightarrow\infty}Q\left(x\right) = \frac{e^{-\frac{x^2}{2}}}{\sqrt{2\pi}x}.
\end{equation}
Then, we have
\begin{equation}\label{EqMLUnionBoundLimit}
\begin{aligned}
&\lim_{\frac{E_b}{N_0}\rightarrow\infty}
\left|\frac{\sum\limits_{d = {d_{\min }}}^N {{A_d}Q\left( {\sqrt {\frac{{2dR{E_b}}}{{{N_0}}}} } \right)}}{{A_{d_{\min}}}Q\left( {\sqrt {\frac{{2{d_{\min}}R{E_b}}}{{{N_0}}}} } \right)} \right| \\
&= \left|1+ \sum\limits_{d={{d}_{\min }+1}}^{N}{\frac{{{A}_{d}}\sqrt{{{d}_{\min }}}}{{{A}_{{{d}_{\min }}}}\sqrt{d}}{{e}^{\frac{\left( {{d}_{\min }}-d \right)R{{E}_{b}}}{{{N}_{0}}}}}} \right| = 1.
\end{aligned}
\end{equation}
Hence, the union bound approaches the MWUB as $\frac{E_b}{N_0}\rightarrow\infty$. With \eqref{EqMLLowerBoundLimit} and \eqref{EqMLUnionBoundLimit}, Lemma \ref{LemmaMLlimit} is proven.
\end{proof}

Then, to derive the normalized deviation of the MWUB from the ML performance, we provide a looser union bound by concentrating the full distance spectrum on the small weight distribution, i.e.,
\begin{equation}\label{EqLooserUnionBound}
\begin{aligned}
\sum\limits_{d = {d_{\min }}}^N {{A_d}Q\left( {\sqrt {\frac{{2dR{E_b}}}{{{N_0}}}} } \right)}
\le {A_{d_{\min}}}Q\left( {\sqrt {\frac{{2{d_{\min}}R{E_b}}}{{{N_0}}}} } \right) + &\\ \sum\limits_{d = {d_{\min }}+2, d_{\min }+4,\cdots,N}{{L_d}Q\left( {\sqrt {\frac{{2dR{E_b}}}{{{N_0}}}} } \right)},&
\end{aligned}
\end{equation}
where the Hamming weight of polar codes is even \cite{LinShuBook}.
When $d$ is even, we have
\begin{equation}
L_d = \min\left\{ \max\left\{0,2^K - A_{d_{\min}} - \sum\limits_{d' = {d_{\min }}+1}^{d-1}{L_{d'}}\right\}, \left( \begin{aligned}
  & N \\
 & d \\
\end{aligned} \right) \right\}
\end{equation}
and when $d$ is odd, we have $L_d = 0$.
Obviously, the RHS of \eqref{EqLooserUnionBound} is close to the MWUB as $\frac{E_b}{N_0}\rightarrow\infty$.

With \eqref{KB} and \eqref{EqLooserUnionBound}, the normalized deviation is provided in Lemma \ref{LemmaNormDeviation}.

\begin{lemma}\label{LemmaNormDeviation}
Given $P_{\tt UB}$, $P_{\tt MWUB}$, $P_{\tt LB}$ and $P_{\tt LUB}$ be the RHSs of \eqref{union_bound}, \eqref{union_bound_a}, \eqref{KB} and \eqref{EqLooserUnionBound}, respectively, the upper bound of the normalized deviation $\delta$ of $P_{\tt MWUB}$ from $P_e$ is
\begin{equation}
\delta = \left|\frac{P_e - P_{\tt MWUB}}{P_{\tt MWUB}}\right| \le \frac{P_{\tt UB} - P_{\tt LB}}{P_{\tt MWUB}} \le \frac{P_{\tt LUB} - P_{\tt LB}}{P_{\tt MWUB}}.
\end{equation}
\end{lemma}

\begin{proof}
Since $P_{\tt LB} \le P_e \le P_{\tt UB} \le P_{\tt LUB}$ and $P_{\tt LB} \le P_{\tt MWUB} \le P_{\tt UB} \le P_{\tt LUB}$, we have $\left|P_e - P_{\tt MWUB}\right| \le P_{\tt UB} - P_{\tt LB} \le P_{\tt LUB} - P_{\tt LB}$ and Lemma \ref{LemmaNormDeviation} is proven.\end{proof}

For an $\left(N, K\right)$ polar codes with $d_{\min}$ and $A_{d_{\min}}$, given the SNR $\frac{E_b}{N_0}$, we can calculate $\delta_1 = \frac{P_{\tt UB} - P_{\tt LB}}{P_{\tt MWUB}}$ and $\delta_2 = \frac{P_{\tt LUB} - P_{\tt LB}}{P_{\tt MWUB}}$
satisfying $\delta \le \delta_1 \le \delta_2$.
Thus, Lemma \ref{LemmaNormDeviation} describes the degree of the MWUB approaching the ML performance with fixed SNR.

Fig. \ref{FigMWUB} illustrates $P_{\tt UB}$, $P_{\tt MWUB}$, $P_{\tt LB}$, $P_{\tt LUB}$, $\delta_1$ and $\delta_2$ of $(64,32)$ polar code constructed by PW with $d_{\min} = 8$ and $A_{d_{\min}} = 664$.
The distance spectrum of the $(64,32)$ polar code is obtained by enumerating all the codewords.
In the above of Fig. \ref{FigMWUB}, we observe that $P_{\tt UB}$, $P_{\tt LB}$ and $P_{\tt LUB}$ approach $P_{\tt MWUB}$ as SNR increases.
The bottom of Fig. \ref{FigMWUB} shows $\frac{E_b}{N_0} = 4.17$dB at $\delta_1 = 1$ is less than $\frac{E_b}{N_0} = 11.95$dB at $\delta_2 = 1$. Hence, by $\delta_1$, the MWD is the dominant factor to determine the ML performance at BLER $10^{-3}$.
In comparison, $\delta_2$ shows larger required SNR, but calculating $\delta_2$ is easy since it is just related to MWD.

\begin{figure}[t]
\setlength{\abovecaptionskip}{0.cm}
\setlength{\belowcaptionskip}{-0.cm}
  \centering{\includegraphics[scale=0.69]{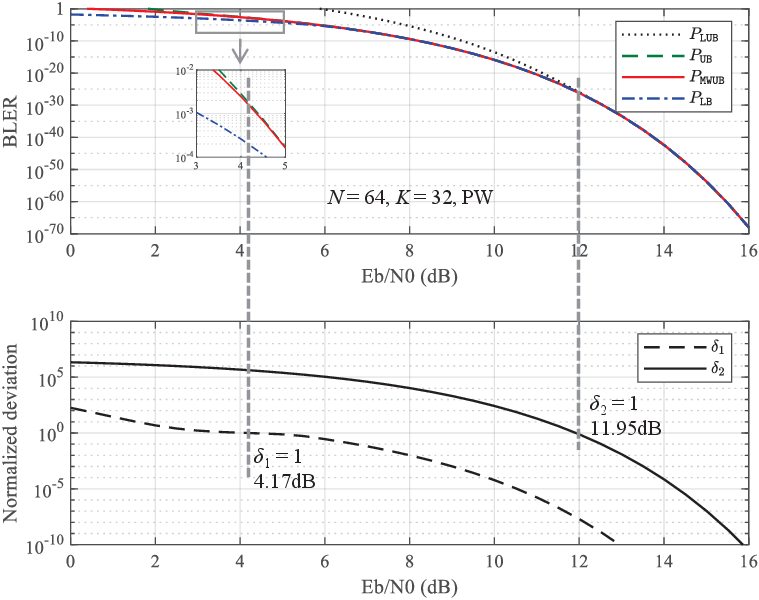}}
  \caption{$P_{\tt UB}$, $P_{\tt MWUB}$, $P_{\tt LB}$, $P_{\tt LUB}$, $\delta_1$ and $\delta_2$ of $(64,32)$ polar code constructed by PW with $d_{\min} = 8$ and $A_{d_{\min}} = 664$.}\label{FigMWUB}
  \vspace{-0em}
\end{figure}

\section{Minimum Weight Distribution Sequence}

In this section, we first describe the decreasing monomial codes and the calculation process of MWD.
Then, the concept of partial MWD is introduced to evaluate the influence of synthetic channel on MWD.
Finally, a nested MWD sequence is designed by the partial MWD.

\subsection{Decreasing Monomial Codes}

An $\left(N, K\right)$ polar code can be viewed as a decreasing monomial code with the monomial set ${\mathcal M}$ and the information subset ${\mathcal I}$ \cite{calculate_MWD}.
Each synthetic channel of polar codes corresponds to a monomial.
For the synthetic channel $W_N^{\left(i\right)}$,
the binary representation
of $i$ is
\begin{equation}\label{binary_i}
{\bf b}^i = \left[b_{n-1}^i, \cdots, b_1^i, b_0^i\right],
\end{equation}
where $b_{n-1}^i$ and $b_0^i$ are the most and the least significant bits, respectively.
Given the binary variable collection $\left\{x_0,x_1,\cdots,x_{n-1}\right\}$, the monomial $f_i$ corresponding to $W_N^{\left(i\right)}$ contains the variables with
the bit-positions of zero in ${\bf b}^i$, i.e.,
\begin{equation}\label{monomial_l}
f_i = \prod_{l=0}^{n-1}{x_l^{1 - b_l^i}} = \prod_{m = 1}^{\deg \left(f_i\right)}{x_{i_m}},
\end{equation}
where $i_m$ is the index of zero in ${\bf b}^i$ with $i_1 < \cdots < i_{\deg \left(f_i\right)}$
and $\deg \left(f_i\right)$ is the degree of $f_i$, i.e.,
\begin{equation}\label{degfi}
\deg \left(f_i\right) = n - \sum_{l=0}^{n-1}{b_l^i}.
\end{equation}

The monomial set ${\mathcal M}$ including all the $N$ monomials is
\begin{equation}\label{Mn}
{\mathcal M} = \left\{f_i : i=0,1,\cdots,N-1 \right\}
\end{equation}
and the information subset ${\mathcal I}$ is
\begin{equation}\label{monomial_I}
{\mathcal I} = \left\{f_i : i \in {\mathcal A} \right\}.
\end{equation}

In \cite{PartialOrder} and \cite{calculate_MWD}, the synthetic channels exhibit a PO ``$\preceq$'' with respect to their reliability, i.e., $f_i \preceq f_j$ means $W_N^{\left(i\right)}$ is more reliable than $W_N^{\left(j\right)}$. The PO for the monomials with the same degree is defined as follows.
\begin{definition}\label{def1_PO}
Given two monomials $f_i = \prod_{m=1}^{D}{{{x}_{i_m}}}$ and $f_j = \prod_{m=1}^{D}{{{x}_{j_m}}}$ with the same degree $D = \deg \left(f_i\right) = \deg \left(f_j\right)$, we have $f_i \preceq f_j$,
if and only if $i_m \le j_m, m = 1,2,\cdots,D$.
\end{definition}

The definition of the PO for the monomials with different degrees is as follows.
\begin{definition}\label{def2_PO}
Given two monomials $f_i = \prod_{m=1}^{\deg(f_i)}{{{x}_{i_m}}}$ and $f_j = \prod_{m=1}^{\deg(f_j)}{{{x}_{j_m}}}$ with $\deg(f_i) \neq \deg(f_j)$, we have $f_i \preceq f_j$, if and only if there is a divisor $f_j^*$ of $f_j$ making $f_i \preceq f_j^*$.
\end{definition}

A polar code is a decreasing monomial code with the monomials in ${\mathcal I}$ obeying the PO in Definition \ref{def1_PO} and Definition \ref{def2_PO}, which means that if $W_N^{\left(j\right)}$ is selected as the information channel, all the synthetic channels more reliable than $W_N^{\left(j\right)}$ satisfying the PO are also the information channels, i.e., if $f_i \preceq f_j$ and $f_j \in {\mathcal I}$, we have $f_i \in {\mathcal I}$.

For the decreasing monomial codes, a method to calculate the MWD is introduced in \cite{calculate_MWD}. Let ${\mathcal I}_r$ be the subset of ${\mathcal I}$ with the monomials having degree $r$, i.e.,
\begin{equation}
{\mathcal I}_r = \left\{f_i : \deg(f_i) = r, f_i \in {\mathcal I} \right\}
\end{equation}
and $r_{\max}^{\mathcal I}$ be the maximum degree of the monomials in ${\mathcal I}$, i.e.,
\begin{equation}
r_{\max}^{\mathcal I} = \max_{f_i \in {\mathcal I}}{\deg{(f_i)}}.
\end{equation}
The number of the minimum weight codewords with ${\mathcal I}$ is
\begin{equation}\label{CalMWDAdmin}
A_{d_{\min}}^{\mathcal I} = 2^{r_{\max}^{\mathcal I}} \times \sum_{f_i \in {\mathcal I}_{r_{\max}^{\mathcal I}}}{2^{\left|\lambda_{f_i}\right|}},
\end{equation}
where the minimum Hamming weight with the subset ${\mathcal I}$ is
\begin{equation}\label{CalMWDdmin}
d_{\min}^{\mathcal I} = 2^{n-r_{\max}^{\mathcal I}}
\end{equation}
and
\begin{equation}\label{LamedaCal}
\left|\lambda_{f_i}\right| =\frac{\deg(f_i)\left(1-\deg(f_i)\right)}{2} + \sum_{m=1}^{\deg(f_i)}{i_m}.
\end{equation}

\subsection{Partial MWD}

We propose a new concept, namely partial MWD, which is used to evaluate the influence of each synthetic channel on the MWD when the synthetic channel is selected to transmit information bit.

Assuming that existing an ordered construction sequence ${\bf q} = \left[q_0,q_1,\cdots,q_{N-1}\right]$ obeys the PO in Definition \ref{def1_PO} and Definition \ref{def2_PO} with $W_N^{(q_{k-1})}$ more reliable than $W_N^{(q_{k})}$, $k=1,2,\cdots,N-1$, given the subset
\begin{equation}
{\mathcal I}^{k} = \left\{f_{q_0},f_{q_1},\cdots,f_{q_{k}}\right\},
\end{equation}
according to \eqref{CalMWDAdmin} and \eqref{CalMWDdmin}, the difference of the MWD between ${\mathcal I}^{k}$ and ${\mathcal I}^{k-1}$ is as follows:
\begin{enumerate}
  \item If $\deg(f_{q_k}) = r_{\max}^{{\mathcal I}^{k-1}}$, we have
      $d_{\min}^{{\mathcal I}^{k}} = d_{\min}^{{\mathcal I}^{k-1}} = 2^{n-\deg(f_{q_k})}$
      and
     $ A_{d_{\min}}^{{\mathcal I}^k} - A_{d_{\min}}^{{\mathcal I}^{k-1}} = 2^{\deg(f_{q_k}) + \left|\lambda_{f_{q_k}}\right|}$.
  \item If $\deg(f_{q_k}) > r_{\max}^{{\mathcal I}^{k-1}}$, we have
      $d_{\min}^{{\mathcal I}^k} = 2^{n-\deg(f_{q_k})} < d_{\min}^{{\mathcal I}^{k-1}}$
      and
      $A_{d_{\min}}^{{\mathcal I}^k} = 2^{\deg(f_{q_k}) + \left|\lambda_{f_{q_k}}\right|}$.
  \item If $\deg(f_{q_k}) < r_{\max}^{{\mathcal I}^{k-1}}$, we have
      $d_{\min}^{{\mathcal I}^{k}} = d_{\min}^{{\mathcal I}^{k-1}}$
      and
      $A_{d_{\min}}^{{\mathcal I}^k} = A_{d_{\min}}^{{\mathcal I}^{k-1}}$.
\end{enumerate}

As shown above, the MWD is changed in 1) and 2). The MWD difference is related to $f_{q_k}$.
Thus, the partial MWD is defined as
\begin{definition}\label{DefinitionPartialMWD}
The partial MWD of $W_N^{(i)}$ is a two-tuple $\left(d_i, A_i\right)$ with
\begin{equation}\label{PMWDdmin}
d_i = 2^{n-\deg(f_{i})}
\end{equation}
and
\begin{equation}\label{PMWDAdmin}
A_i = 2^{\deg(f_{i}) + \left|\lambda_{f_{i}}\right|}.
\end{equation}
\end{definition}

The partial MWD $\left(d_i, A_i\right)$ means the influence on the MWD when $W_N^{(i)}$ is selected as the information channel with the information set ${\mathcal A}$ obeying the PO.

\subsection{Construction Sequence Based on MWD}

In this part, we first propose criteria to order the $N$ synthetic channels based on the partial MWD and design a construction sequence. Then, we prove that the sequence is nested and has the optimum performance evaluated by the MWUB in \eqref{union_bound_a} for the polar codes obeying the PO.

The relationships of the partial MWD associated with the PO shown in Definition \ref{def1_PO} and Definition \ref{def2_PO} are provided in Lemma \ref{LemmaPO1} and Lemma \ref{LemmaPO2}, respectively.
\begin{lemma}\label{LemmaPO1}
For the PO in Definition \ref{def1_PO}, given $f_i = \prod_{m=1}^{D}{{{x}_{i_m}}}$ and $f_j = \prod_{m=1}^{D}{{{x}_{j_m}}}$ with $f_i \preceq f_j$ and $D = \deg \left(f_i\right) = \deg \left(f_j\right)$, we have $A_i < A_j$.
\end{lemma}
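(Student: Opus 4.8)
The plan is to reduce the claimed inequality on the partial multiplicities $A_i$ and $A_j$ to a comparison of the exponents given in Definition \ref{DefinitionPartialMWD}, and then to exploit the coordinatewise comparison of the index vectors supplied by the partial order in Definition \ref{def1_PO}. Since $A_i = 2^{\deg(f_i) + |\lambda_{f_i}|}$ and the map $x \mapsto 2^x$ is strictly increasing, it suffices to prove the single inequality $\deg(f_i) + |\lambda_{f_i}| < \deg(f_j) + |\lambda_{f_j}|$.

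First I would use the hypothesis $\deg(f_i) = \deg(f_j) = D$ to cancel all the contributions common to both monomials. The degree term adds $D$ to each exponent, and by \eqref{LamedaCal} the quadratic correction $\frac{D(1-D)}{2}$ appearing inside $|\lambda_{f_i}|$ and $|\lambda_{f_j}|$ depends only on the common degree $D$, hence is identical for the two monomials. Consequently the target inequality collapses to the purely linear statement $\sum_{m=1}^{D} i_m < \sum_{m=1}^{D} j_m$, i.e. the sum of the variable indices of $f_i$ must be strictly smaller than that of $f_j$.

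Next I would invoke Definition \ref{def1_PO} directly: for equal-degree monomials, the relation $f_i \preceq f_j$ means precisely that $i_m \le j_m$ for every $m = 1,\ldots,D$, so summing these $D$ inequalities immediately yields $\sum_{m=1}^{D} i_m \le \sum_{m=1}^{D} j_m$. The only remaining point—which deserves care rather than being a genuine obstacle—is upgrading this to a \emph{strict} inequality. Here I would observe that the two synthetic channels are distinct ($i \ne j$), so with equal degree their ordered index tuples $(i_1,\ldots,i_D)$ and $(j_1,\ldots,j_D)$ must differ in at least one coordinate; combined with $i_m \le j_m$ for all $m$, at least one of these comparisons is strict, which forces $\sum_{m=1}^{D} i_m < \sum_{m=1}^{D} j_m$ and therefore $A_i < A_j$. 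The genuinely interesting content of the lemma is thus exactly this strictness argument: if one instead permitted $f_i = f_j$ in the relation $\preceq$, the conclusion would only weaken to $A_i \le A_j$.
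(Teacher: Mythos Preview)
Your proof is correct and follows essentially the same route as the paper: both expand $A_i$ and $A_j$ via \eqref{LamedaCal} and \eqref{PMWDAdmin}, cancel the common $D$-dependent terms, and reduce the claim to $\sum_m i_m < \sum_m j_m$ using Definition~\ref{def1_PO}. Your treatment is in fact slightly more careful than the paper's, which asserts the strict inequality directly from Definition~\ref{def1_PO} without isolating the step where distinctness of $f_i$ and $f_j$ upgrades the coordinatewise $\le$ to a strict sum inequality.
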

\begin{proof}
According to \eqref{LamedaCal} and \eqref{PMWDAdmin}, we have
$A_i = 2^{\frac{3D-D^2}{2} + \sum_{m=1}^{D}{i_m}}$ and
$A_j = 2^{\frac{3D-D^2}{2} + \sum_{m=1}^{D}{j_m}}$.
By Definition \ref{def1_PO}, we have
$\sum_{m=1}^{D}{i_m} < \sum_{m=1}^{D}{j_m}$.
Thus, $A_i < A_j$ is derived.
\end{proof}

\begin{lemma}\label{LemmaPO2}
For the PO in Definition \ref{def2_PO}, given two monomials $f_i$ and $f_j $ with $\deg(f_i) < \deg(f_j)$ and $f_i \preceq f_j$, we have $d_i > d_j$.
\end{lemma}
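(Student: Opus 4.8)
The plan is to observe that the quantity $d_i$ in the partial MWD depends only on the degree of the monomial $f_i$, so the claimed strict inequality is a direct consequence of the degree comparison already assumed in the hypothesis. Concretely, I would start from the definition \eqref{PMWDdmin}, which gives $d_i = 2^{\,n-\deg(f_i)}$ and $d_j = 2^{\,n-\deg(f_j)}$. These two identities are the only facts about the partial MWD that the argument needs.

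Next, I would use the hypothesis $\deg(f_i) < \deg(f_j)$ to deduce $n - \deg(f_i) > n - \deg(f_j)$, and then invoke the strict monotonicity of the map $x \mapsto 2^{x}$ to conclude $d_i = 2^{\,n-\deg(f_i)} > 2^{\,n-\deg(f_j)} = d_j$, which is exactly the claim. The whole argument is a one-line chain of implications once the definition of $d_i$ is substituted.

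There is essentially no obstacle here, and it is worth contrasting this with Lemma \ref{LemmaPO1}. In that lemma the partial-order relation $f_i \preceq f_j$ genuinely entered, because comparing $A_i$ and $A_j$ required comparing the position sums $\sum_m i_m$ and $\sum_m j_m$ in the exponent of $A$, and it was Definition \ref{def1_PO} that supplied $\sum_m i_m < \sum_m j_m$. By contrast, the minimum-distance component $d_i = 2^{\,n-\deg(f_i)}$ is a strictly decreasing function of the degree alone, so the present statement never actually uses $f_i \preceq f_j$; only the degree inequality $\deg(f_i) < \deg(f_j)$ is required. If desired, I would add a short remark noting that the hypothesis $f_i \preceq f_j$ is stated only for parallelism with Lemma \ref{LemmaPO1} and to match how the lemma is later applied, so that no generality is lost by its inclusion.
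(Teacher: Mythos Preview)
Your proposal is correct and matches the paper's own proof, which simply states that the claim is clear from \eqref{PMWDdmin}. Your additional observation that the hypothesis $f_i \preceq f_j$ is not actually used---only the degree inequality matters---is accurate and a worthwhile clarification beyond what the paper records.
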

\begin{proof}
The proof is clear by \eqref{PMWDdmin}.
\end{proof}

Then, given $U\left(W_N^{\left(i\right)}\right)$ is the reliability of $W_N^{\left(i\right)}$ with the new PO based on MWD, the order criteria for $W_N^{(i)}$ based on the partial MWD are as follows:
\begin{enumerate}
  \item According to Lemma \ref{LemmaPO2}, $W_N^{(i)}$ with larger $d_i$ has larger $U\left(W_N^{\left(i\right)}\right)$ than others.
  \item According to Lemma \ref{LemmaPO1}, when the synthetic channels have the identical $d_i$, $W_N^{(i)}$ with less $A_i$ has larger $U\left(W_N^{\left(i\right)}\right)$ than the others.
  \item When the synthetic channels have the identical $d_i$ and $A_i$, $W_N^{(i)}$ with larger $i$ has larger $U\left(W_N^{\left(i\right)}\right)$ than others empirically.
\end{enumerate}

Based on criterion 1), the monomials of $W_N^{(i)}$, $i = 0, 1, \cdots, N-1$ are divided into $(n+1)$ subsets
\begin{equation}\label{DivideM}
{\mathcal M}_r = \left\{ f_i : \deg(f_i) = r, f_i \in {\mathcal M} \right\}, r = 0,1,\cdots,n.
\end{equation}
and the monomials in ${\mathcal M}_r$ have the identical $d_i$, i.e.,
\begin{equation}\label{dMonomialSubset}
d_i = 2^{n-r}, f_i \in {\mathcal M}_r.
\end{equation}
Then, let ${\mathcal B}_r$ be the subset of the indices of the synthetic channels corresponding to the monomials in ${\mathcal M}_r$, i.e.,
\begin{equation}\label{DivideA}
{\mathcal B}_r = \left\{ i  : f_i \in {\mathcal M}_r  \right\}, r = 0,1,\cdots,n.
\end{equation}
Hence, the synthetic channels in different subsets have different reliability levels, i.e.,
\begin{equation}\label{cre1dRelia}
\begin{aligned}
U\left(W_N^{\left(i\right)}\right) > U\left(W_N^{\left(j\right)}\right),
\text{ if }
i \in {\mathcal B}_r, j \in {\mathcal B}_{s} \text{ and } r < s.
\end{aligned}
\end{equation}

Then, criterion 2) is used to order the synthetic channels in the identical subset. For $W_N^{\left(i\right)}$ with $i \in {\mathcal B}_r$, calculate $A_i$ by \eqref{PMWDAdmin}. The reliability order is
\begin{equation}
\begin{aligned}
U\left(W_N^{\left(i\right)}\right) > U\left(W_N^{\left(j\right)}\right),
\text{ if }
A_i < A_j, i \in {\mathcal B}_r \text{ and } j \in {\mathcal B}_r.
\end{aligned}
\end{equation}

Finally, since the synthetic channel with a larger index is more reliable empirically when the PO between two synthetic channels is not clear, criterion 3) is designed.
Hence, the synthetic channels in the identical subset with identical partial MWD are ordered as
\begin{equation}
\begin{aligned}
U\left(W_N^{\left(i\right)}\right) > U\left(W_N^{\left(j\right)}\right),
\text{ if }
i > j, A_i = A_j, i \in {\mathcal B}_r \text{ and } j \in {\mathcal B}_r.
\end{aligned}
\end{equation}

\begin{algorithm}[t]
\setlength{\abovecaptionskip}{0.cm}
\setlength{\belowcaptionskip}{-0.cm}
\caption{MWD sequence}\label{SequenceDesign}

\KwIn {The code length $N$;}
\KwOut {The $N$-length MWD sequence ${\bf q}$;}

Calculate $\left(d_i, A_i\right)$ of $W_N^{(i)}, i=0,1,\cdots,N-1$ by \eqref{PMWDdmin} and \eqref{PMWDAdmin}\;

Divide the $N$ synthetic channels into $(n+1)$ subsets ${\mathcal B}_r, r = 0,1,\cdots,n$ by \eqref{DivideA}\;

Initialize $B_r \leftarrow \sum_{i=0}^{r}{\left|{\mathcal B}_i\right|}, r = 0,1,\cdots,n$ and $B_{-1} = 0$\;

\For{$r = 0,1,\cdots,n$}
{
    Initialize a $\left| {\mathcal B}_r \right|$-length sequence ${\bf p}$\;

    Initialize ${\mathcal T} \leftarrow {\mathcal B}_r$ and $j \leftarrow 0$\;

    \While{${\mathcal T} \ne \varnothing$}
    {
        ${\mathcal Q} \leftarrow \underset{i \in {\mathcal T}}{\mathop{\arg \min }}\,A_i $\;
        $a \leftarrow {\mathop{\max }}\, {\mathcal Q}$\;
        $p_j \leftarrow a$,
        $j \leftarrow j + 1$ and
        ${\mathcal T} \leftarrow {\mathcal T} \setminus \left\{a\right\}$\;
    }
    \For{$k = 0,1,\cdots,\left|{\mathcal B}_r\right|-1$}
    {
        ${q}_{B_{r-1}+k} \leftarrow {p}_k$\;
    }
}
\Return $\bf q$\;
\end{algorithm}

According to the criteria, the reliability order is clear and the construction sequence based on MWD is designed directly. The design method is described in Algorithm \ref{SequenceDesign}.

Then, in Lemma \ref{LemmaML}, we prove the polar codes obeying the PO with the MWD sequence have the optimum performance evaluated by the MWUB in the high SNR region.
In Lemma \ref{LemmaNested}, we prove the MWD sequence is nested, which means that the MWD sequence can be used similarly to the polar sequence in 5G \cite{3GPP_5G_polar}.

\begin{lemma}\label{LemmaML}
For all the $(N,K)$ polar codes obeying the PO, the polar code constructed by the MWD sequence $\bf q$ has the optimum MWUB as SNR goes to infinity.
\end{lemma}
\begin{proof}
As SNR goes to infinity, according to \eqref{union_bound_a}, $d_{\min}$ is the key factor and the MWUB increases as $d_{\min}$ decreases. When $d_{\min}$ is fixed, the MWUB increases as $A_{d_{\min}}$ increases. Hence, the optimum polar code evaluated by the MWUB should ensure 1) the maximum $d_{\min}$ and 2) the minimum $A_{d_{\min}}$ when $d_{\min}$ is identical.

Let an $(N,K)$ polar code ${\mathcal C}$ constructed by $\bf q$ has the information set
\begin{equation}\label{MWDSA}
{\mathcal A} = \left\{q_0,q_1,\cdots,q_{K-1} \right\}.
\end{equation}
The information set ${\mathcal A}'$ of another polar code ${\mathcal C}'$ obeying the PO has at least one element with the index equal or larger than $K$, i.e.,
$\exists q_i \in {\mathcal A}'$, $i \ge K$.
According to criterion 1), the minimum Hamming weight of ${\mathcal C}$ is $d_{\min} = d_{q_{K-1}}$ and we have
$d_{q_{K-1}} \ge d_i, i = K,K+1,\cdots,N-1$.
Thus, the polar code ${\mathcal C}$ constructed by $\bf q$ has the maximum $d_{\min}$.

Assuming that the minimum Hamming weight of ${\mathcal C}'$ is
$d_{\min}' = d_{\min}$, the numbers of codewords of ${\mathcal C}$ and ${\mathcal C}'$ with the minimum Hamming weight $d_{\min}$ are
$A_{d_{\min}} = \sum_{i \in ({\mathcal A} \cap {\mathcal B}_{r}) }{A_i}$ and
$A_{d_{\min}'} = \sum_{i \in ({\mathcal A}' \cap {\mathcal B}_{r}) }{A_i}$, respectively, where $r = n - \log{d_{\min}}$. According to criterion 2), $\forall \overline{\mathcal B} \subseteq {\mathcal B}_{r}$ with $|\overline{\mathcal B}| = |{\mathcal A} \cap {\mathcal B}_{r}|$ makes
\begin{equation}\label{EqLeAdmin1}
A_{d_{\min}} \le \sum_{i \in \overline{\mathcal B}}{A_i}.
\end{equation}
Then, since
$|{\mathcal A} \cap {\mathcal B}_{r}| \le |{\mathcal A}' \cap {\mathcal B}_{r}|$, we can obtain
$\exists \overline{\mathcal B} \subseteq ({\mathcal A}' \cap {\mathcal B}_{r})$ with $|\overline{\mathcal B}| = |{\mathcal A} \cap {\mathcal B}_{r}|$ making
\begin{equation}\label{EqLeAdmin2}
\sum_{i \in \overline{\mathcal B}}{A_i} \le A_{d_{\min}'}.
\end{equation}
According to \eqref{EqLeAdmin1} and \eqref{EqLeAdmin2}, we have
$A_{d_{\min}} \le A_{d_{\min}'}$.
Hence, the polar code constructed by $\bf q$ has the minimum $A_{d_{\min}}$ when $d_{\min}$ is identical and the polar code obeys the PO. Thus, Lemma \ref{LemmaML} is proved.
\end{proof}

\begin{figure*}[t]
\setlength{\abovecaptionskip}{0.cm}
\setlength{\belowcaptionskip}{-0.cm}
  \centering{\includegraphics[scale=0.67]{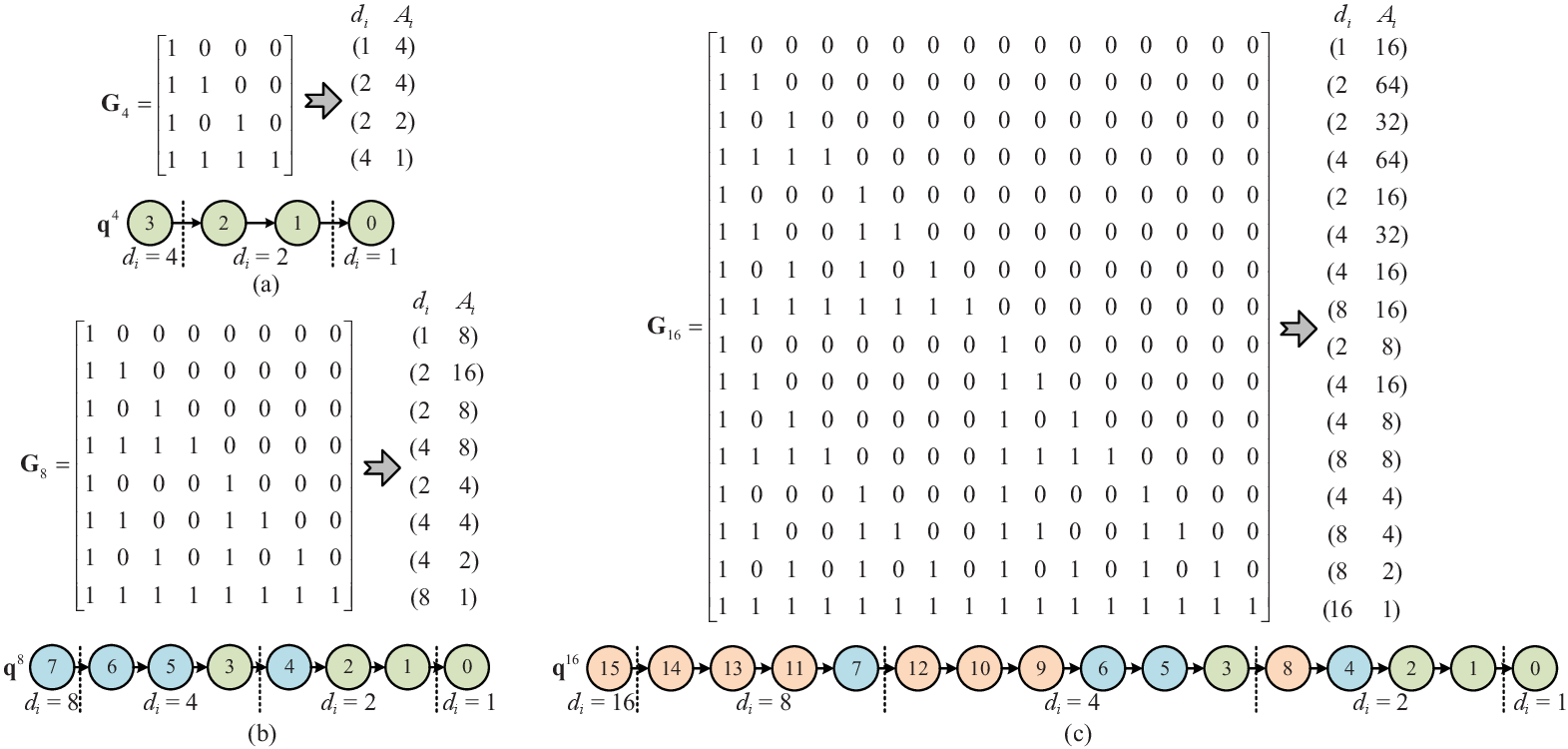}}
  \caption{Fig. \ref{FigExampleMWD}(a), Fig. \ref{FigExampleMWD}(b) and Fig. \ref{FigExampleMWD}(c) illustrate the examples of the nested MWD sequences with $N = 4, 8, 16$, respectively.}\label{FigExampleMWD}
  \vspace{0em}
\end{figure*}

\begin{lemma}\label{LemmaNested}
Let ${\bf q}^N = \left[q_0^{N}, q_1^{N},\cdots,q_{N-1}^{N}\right]$ and ${\bf q}^{2N} = \left[q_0^{2N}, q_1^{2N},\cdots,q_{2N-1}^{2N}\right]$ be $N$-length and $2N$-length MWD sequences, respectively. For any ${q}_i^{N}$ and ${q}_j^{N}$ with $i < j$, we can find ${q}_k^{2N} = q_i^{N}$ and ${q}_l^{2N} = q_j^{N}$ with $k < l$, which means that the MWD sequence is nested.
\end{lemma}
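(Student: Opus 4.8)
The plan is to show that the relative order of any two indices $a,b \in \{0,1,\cdots,N-1\}$ is identical in ${\bf q}^N$ and ${\bf q}^{2N}$. Since ${\bf q}^{2N}$ is a permutation of $\{0,1,\cdots,2N-1\} \supseteq \{0,1,\cdots,N-1\}$, both $q_i^N$ and $q_j^N$ certainly appear somewhere in ${\bf q}^{2N}$, and preserving their relative order is exactly the nesting claim.

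First I would relate the monomial of a fixed index $i<N=2^n$ in the two codes. The $(n+1)$-bit representation of such an $i$ is $[b_0^i,\cdots,b_{n-1}^i,0]$, i.e.\ it simply appends a zero in the most significant position. By \eqref{monomial_l} this extra zero contributes the variable $x_n$, so the length-$2N$ monomial satisfies $f_i^{2N} = x_n f_i^{N}$ and $\deg(f_i^{2N}) = \deg(f_i^{N})+1$. Next I would track the partial MWD under this map. From \eqref{PMWDdmin}, $d_i^{2N} = 2^{(n+1)-\deg(f_i^{2N})} = 2^{n-\deg(f_i^{N})} = d_i^{N}$, so every index keeps its $d$-value, equivalently its degree class (subset ${\mathcal B}_r$), when the code length doubles. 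For the multiplicity, substituting the index set $\{i_1,\cdots,i_D,n\}$ with $D=\deg(f_i^{N})$ into \eqref{LamedaCal} and \eqref{PMWDAdmin} gives, after a short calculation, $A_i^{2N} = 2^{(n+1)-D}\,A_i^{N}$. The crucial feature is that this scaling factor depends only on $D$, hence is constant across all indices lying in the same subset ${\mathcal B}_r$.

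Finally I would run the three ordering criteria on two indices $a,b<N$. If $a$ and $b$ lie in different degree classes, criterion~1 orders them by $d$, which is unchanged, so their order agrees in both sequences. If they share a degree class $D$, then $A_a^{2N} < A_b^{2N}$ if and only if $A_a^{N} < A_b^{N}$, because both multiplicities are scaled by the same positive factor $2^{(n+1)-D}$; thus criterion~2 yields the same order. When $A_a = A_b$ in both codes, criterion~3 breaks the tie using the (unchanged) index, again identically. Hence the relative order of $a$ and $b$ is preserved, which proves the MWD sequence is nested.

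The main obstacle is the multiplicity computation and, more precisely, verifying that the $A$-value scaling is \emph{constant within each degree class}: this is what guarantees that the $A_i$-ordering of criterion~2 is preserved rather than merely shifted. Once this is established, the $d$-preservation underlying criterion~1 and the index-based tie-break of criterion~3 are immediate.
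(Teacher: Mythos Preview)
Your proof is correct and follows essentially the same route as the paper's: both arguments observe that passing from $N$ to $2N$ multiplies each monomial by $x_n$, hence shifts all degrees by one, and then verify that each of the three ordering criteria is preserved under this shift. Your explicit computation of the scaling factor $A_i^{2N}=2^{\,n+1-D}A_i^{N}$ makes the preservation of criterion~2 within a degree class fully transparent, whereas the paper simply appeals to \eqref{PMWDAdmin} for the inequality $A_{q_k^{2N}}\le A_{q_l^{2N}}$; otherwise the two proofs coincide.
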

\begin{proof}
Let $f_{{q}_i^{N}}$, $f_{{q}_j^{N}}$, $f_{{q}_k^{2N}}$ and $f_{{q}_l^{2N}}$ be the monomials corresponding to ${q}_i^{N}$, ${q}_j^{N}$, ${q}_k^{2N}$ and ${q}_l^{2N}$,  respectively.
Given ${q}_k^{2N} = q_i^{N}$ and ${q}_l^{2N} = q_j^{N}$, we have
\begin{equation}
\begin{aligned}
\label{LemmaNestEq1}
{q}_k^{2N} = q_i^{N} \Rightarrow f_{{q}_k^{2N}} = f_{{q}_i^{N}} \cdot x_{n} \Rightarrow \deg(f_{{q}_k^{2N}}) = \deg(f_{{q}_i^{N}}) + 1,
\end{aligned}
\end{equation}
\begin{equation}
\begin{aligned}
\label{LemmaNestEq2}
{q}_l^{2N} = q_j^{N} \Rightarrow f_{{q}_l^{2N}} = f_{{q}_j^{N}} \cdot x_{n} \Rightarrow \deg(f_{{q}_l^{2N}}) = \deg(f_{{q}_j^{N}}) + 1.
\end{aligned}
\end{equation}
Assuming $i < j$, we have $d_{{q}_i^{N}} \ge d_{{q}_j^{N}}$, which leads to
\begin{equation}
\deg(f_{{q}_i^{N}}) \le \deg(f_{{q}_j^{N}}) \Rightarrow \deg(f_{{q}_k^{2N}}) \le \deg(f_{{q}_l^{2N}}).
\end{equation}
If $\deg(f_{{q}_i^{N}}) < \deg(f_{{q}_j^{N}})$, by \eqref{LemmaNestEq1} and \eqref{LemmaNestEq2}, we obtain
\begin{equation}
\deg(f_{{q}_k^{2N}}) < \deg(f_{{q}_l^{2N}}) \Rightarrow d_{{q}_k^{2N}} > d_{{q}_l^{2N}} \Rightarrow k < l.
\end{equation}
If $\deg(f_{{q}_i^{N}}) = \deg(f_{{q}_j^{N}})$, we have
$A_{{{q}_i^{N}}} \le A_{{{q}_j^{N}}}$, since $i < j$.
By \eqref{PMWDAdmin}, we have
\begin{equation}\label{LemmaNestEq5}
A_{{{q}_k^{2N}}} \le A_{{{q}_l^{2N}}}.
\end{equation}
Equality holds in \eqref{LemmaNestEq5} iff $A_{{{q}_i^{N}}} = A_{{{q}_j^{N}}}$.
For $A_{{{q}_k^{2N}}} < A_{{{q}_l^{2N}}}$,we have $k < l$ by criterion 2).
For $A_{{{q}_k^{2N}}} = A_{{{q}_l^{2N}}}$, we have $A_{{{q}_i^{N}}} = A_{{{q}_j^{N}}}$, which derives
\begin{equation}
{q}_i^{N} > {q}_j^{N} \Rightarrow {q}_k^{2N} > {q}_l^{2N} \Rightarrow k < l
\end{equation}
by criterion 3).
Hence, Lemma \ref{LemmaNested} is proved.
\end{proof}

Then, Fig. \ref{FigExampleMWD} provides examples to illustrate the MWD sequence. In Fig. \ref{FigExampleMWD} (a), Fig. \ref{FigExampleMWD}(b) and Fig. \ref{FigExampleMWD}(c), the examples of the MWD sequences with $N = 4, 8, 16$ are provided, respectively.
For each example, we calculate the partial MWD by \eqref{PMWDdmin} and \eqref{PMWDAdmin} and obtain the MWD sequence by Algorithm \ref{SequenceDesign}.
The MWD sequences in Fig. \ref{FigExampleMWD}(a), Fig. \ref{FigExampleMWD}(b) and Fig. \ref{FigExampleMWD}(c) are
\begin{align}
{\bf q}^4 &= \left[3,2,1,0\right],\\
{\bf q}^8 &= \left[7,6,5,3,4,2,1,0\right],\\
{\bf q}^{16} &= \left[15,14,13,11,7,12,10,9,6,5,3,8,4,2,1,0\right].
\end{align}
The elements colored green in ${\bf q}^4$ have the identical order in ${\bf q}^8$ and ${\bf q}^{16}$. Then, the elements colored blue in ${\bf q}^8$ also have the same order in ${\bf q}^{16}$. Thus, ${\bf q}^4$ is nested in ${\bf q}^8$ which is also nested in ${\bf q}^{16}$.

\begin{remark}
By Lemma \ref{LemmaNested}, given an $N_{\max}$-length MWD sequence ${\bf q}^{N_{\max}}$, any $(N,K)$ polar code with $N \le N_{\max}$ can be constructed by ${\bf q}^{N_{\max}}$. The construction method is identical to the polar sequence in 5G, which means the MWD sequence is practical.
\end{remark}

\begin{remark}
Though polar codes obeying PO with the MWD sequence have the optimum MWUB in the high SNR region by Lemma \ref{LemmaML}, achieving the ML performance is difficult for long code length under the SCL decoding with limited list size. Thus, a suitable construction method based on MWD for SCL decoding is required.
\end{remark}

\section{Construction Methods Based on MWD for SCL Decoding}

In this section, we first introduce the entropy constraint according to the information-theoretic perspective on SCL decoding. Then, we use the entropy constraint to provide a construction method called entropy constraint bit-swapping (ECBS) algorithm.

\subsection{Entropy Constraint}

The information-theoretic perspective on SCL decoding proposed in \cite{InformationSCL} establishes the relationship between the average list size and the ML decoding over binary memoryless symmetric channels in Theorem \ref{theorem1}.

\begin{theorem}[\unskip\cite{InformationSCL}]\label{theorem1}
The set of survival path ${\hat {\bf u}}_0^i$ with a larger likelihood than some fraction of that for true path ${\bf u}_0^i$ after the $i$-stage of SCL decoding is
\begin{equation}\label{theoEq1}
{\mathcal S}_{\alpha}^{(i)}\left({\bf u}_0^i, {\bf y}\right) \triangleq \left\{{\widehat {\bf u}}_0^i:P\left({\widehat {\bf u}}_0^i | {\bf y} \right) \ge \alpha P\left({{\bf u}}_0^i | {\bf y} \right) \right\},
\end{equation}
where the fraction is $\alpha \le 1$. The average of the logarithm of the cardinality is upper bounded by
\begin{equation}\label{theo1upper}
\mathbb{E}\left[ \log{|{\mathcal S}_{\alpha}^{(i)}|} \right] \le \log{\alpha^{-1}} +\sum_{j \in {\mathcal A}^{(i)}}{H\left(W_N^{(j)}\right)},
\end{equation}
where ${\mathcal A}^{(i)} \triangleq {\mathcal A} \bigcap \left\{0,1,\cdots,i\right\}$ and ${H\left(W_N^{(j)}\right)}$ is the entropy of $W_N^{(j)}$.
\end{theorem}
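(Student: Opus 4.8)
The plan is to control $|\mathcal{S}_\alpha^{(i)}|$ by a likelihood ratio through a counting argument, and then to recognize the expected logarithm of that ratio as a conditional entropy that the chain rule splits into the synthetic-channel entropies $H(W_N^{(j)})$. First I would use the defining inequality of the set: every prefix $\hat{\mathbf{u}}_0^i\in\mathcal{S}_\alpha^{(i)}$ obeys $P_i(\hat{\mathbf{u}}_0^i)\ge\alpha P_i(\mathbf{u}_0^i)$, so summing over the set and then enlarging the sum to all prefixes consistent with the frozen bits gives
\begin{equation}
\alpha\,P_i(\mathbf{u}_0^i)\,\bigl|\mathcal{S}_\alpha^{(i)}\bigr|\le\sum_{\hat{\mathbf{u}}_0^i}P_i(\hat{\mathbf{u}}_0^i),
\end{equation}
whence $\bigl|\mathcal{S}_\alpha^{(i)}\bigr|\le\alpha^{-1}\bigl(\sum_{\hat{\mathbf{u}}_0^i}P_i(\hat{\mathbf{u}}_0^i)\bigr)/P_i(\mathbf{u}_0^i)$. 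Since $P_i(\hat{\mathbf{u}}_0^i)\propto\Pr(\hat{\mathbf{u}}_0^i,\mathbf{y})$, the unknown proportionality constant cancels in this ratio.

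Next I would take logarithms and expectations. The factor $\alpha^{-1}$ contributes the term $\log\alpha^{-1}$ at once, so the work reduces to evaluating $\mathbb{E}\bigl[\log\bigl(\sum_{\hat{\mathbf{u}}_0^i}\Pr(\hat{\mathbf{u}}_0^i,\mathbf{y})/\Pr(\mathbf{u}_0^i,\mathbf{y})\bigr)\bigr]$. Summing $\Pr(\hat{\mathbf{u}}_0^i,\mathbf{y})$ over the prefixes that agree with the frozen bits marginalizes out the information coordinates in $\mathcal{A}^{(i)}$, so the ratio equals $1/\Pr\bigl(\mathbf{u}_{\mathcal{A}^{(i)}}\mid\mathbf{u}_{\mathcal{A}^c\cap\{0,\dots,i\}}=\mathbf{0},\,\mathbf{y}\bigr)$, the reciprocal posterior of the true information bits given the output and the zero frozen bits. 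Averaging $-\log$ of this posterior over the channel output and the transmitted word---reducible to the all-zero codeword by the symmetry of the channel---yields the conditional entropy $H\bigl(U_{\mathcal{A}^{(i)}}\mid\mathbf{Y},\,U_{\mathcal{A}^c\cap\{0,\dots,i\}}\bigr)$ of the stage-$i$ information bits given the output and the frozen bits.

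Finally I would apply the chain rule of entropy, expanding this conditional entropy over the information positions in decoding order. The term for position $j\in\mathcal{A}^{(i)}$ is the entropy of $U_j$ conditioned on $\mathbf{Y}$, on all earlier bits $\mathbf{u}_0^{j-1}$, and on the later frozen bits; dropping the later frozen coordinates cannot increase it, so it is at most $H(U_j\mid\mathbf{Y},\mathbf{u}_0^{j-1})=H(W_N^{(j)})$, the entropy of the $j$-th synthetic channel, and summing yields the claimed bound. I expect this last step to be the main obstacle: one must justify through channel symmetry that conditioning on the fixed zero frozen bits leaves each per-coordinate entropy equal to the uniform-input synthetic-channel entropy $H(W_N^{(j)})$, and must track carefully which conditioning variables the ``conditioning cannot increase entropy'' inequality removes, so that the decoding-order chain rule aligns exactly with the definition of the synthetic channels.
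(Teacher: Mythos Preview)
The paper does not prove Theorem~\ref{theorem1} at all: it is quoted verbatim from \cite{InformationSCL} and simply used as a black box to motivate the entropy constraint~\eqref{BGRcre4}. So there is no ``paper's own proof'' to compare against.

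That said, your proposal is exactly the standard argument for this result and is correct. The counting step $|\mathcal{S}_\alpha^{(i)}|\le\alpha^{-1}\sum_{\hat{\mathbf u}_0^i}P_i(\hat{\mathbf u}_0^i)/P_i(\mathbf u_0^i)$ is the right starting point; recognizing the expected log of the ratio as the conditional entropy $H(U_{\mathcal A^{(i)}}\mid\mathbf Y,\,U_{\mathcal A^c\cap\{0,\dots,i\}})$ is the key identification; and the chain rule in decoding order, followed by dropping the future frozen bits from the conditioning, gives each term bounded by $H(U_j\mid\mathbf Y,U_0^{j-1})=H(W_N^{(j)})$. Your caveat about channel symmetry is also well placed: for a binary memoryless symmetric channel the distribution of the log-likelihood ratio is invariant under the all-zero assumption, so fixing the frozen bits to zero does not change the relevant conditional entropies. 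There is no genuine gap in your outline.
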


Assuming $\alpha = 1$, \eqref{theoEq1} is simplified as
\begin{equation}
{\mathcal S}_{1}^{(i)}\left({\bf u}_0^i, {\bf y}\right) =
\left\{{\widehat {\bf u}}_0^i:P\left({\widehat {\bf u}}_0^i | {\bf y} \right) \ge P\left({{\bf u}}_0^i | {\bf y} \right) \right\}.
\end{equation}
Given the information sequence ${\bf u}_0^{N-1}$, the SCL decoding can achieve the ML performance if $L$ satisfies
\begin{equation}\label{BGRcre1}
L \ge \underset{\mathbf{y}\in {{\mathsf{\mathcal{Y}}}^{N}, i = 0,1,\cdots,N-1}}{\mathop{\max }}\,|\mathsf{\mathcal{S}}_{1}^{(i)}|.
\end{equation}

Due to obtaining the RHS of \eqref{BGRcre1} requiring high search complexity, we propose a heuristic entropy constraint at the $i$-th stage by using the upper bound of the average list size in \eqref{theo1upper} to evaluate if $L$ is large enough to achieve the ML performance, i.e.,
\begin{equation}\label{BGRcre3}
\log{L} \ge \sum_{j \in {\mathcal A}^{(i)}}{H\left(W_N^{(j)}\right)} \ge \mathbb{E}\left[ \log{|{\mathcal S}_{1}^{(i)}|} \right], i = 0,1,\cdots,N-1.
\end{equation}
Then, to make the SCL decoding after the $N$-stage satisfying \eqref{BGRcre3}, $L$ should satisfy the heuristic entropy constraint as
\begin{equation}\label{BGRcre4}
\log{L} \ge {H\left({\mathcal A}\right)},
\end{equation}
where ${H\left({\mathcal A}\right)} \triangleq \sum_{j \in {\mathcal A}}{H_j}$ and $H_j \triangleq H\left(W_N^{(j)}\right)$.

\subsection{Entropy Constraint Bit-Swapping Algorithm}

\begin{algorithm}[t]
\setlength{\abovecaptionskip}{0.cm}
\setlength{\belowcaptionskip}{-0.cm}
\caption{ECBS algorithm}\label{BGRMWDGA}

\KwIn {The code length $N$, the information length $K$, the list size $L$ and the SNR $\frac{E_b}{N_0}$;}
\KwOut {The information set ${\mathcal A}$;}

Divide the $N$ synthetic channels into $(n+1)$ subsets ${\mathcal B}_r, r = 0,1,\cdots,n$ by \eqref{DivideA}\;
Calculate $H_j$, $j=0,1,\cdots,N-1$\;

\For{$b \leftarrow 0,1,\cdots,n$}
{
    Initialize ${\mathcal A}$ by the MWD sequence ${\bf q}$\;
    Set ${\mathcal B} \leftarrow \mathop  \bigcup_{r = 0}^b {{\cal B}_r}$\;
    \If{$\left|{\mathcal B}\right| \ge K$}
    {
        \While{$\log{L} < {H\left({\mathcal A}\right)}$}
        {
            ${\mathcal F}' \leftarrow {\mathcal B} \setminus {\mathcal A}$\;
            $j_{\max} \leftarrow \mathop{\arg\max}\limits_{j \in {\mathcal A}}{H_j}$ and $H_{\max} \leftarrow H_{j_{\max}}$\;
            $j_{\min} \leftarrow \mathop{\arg \min}\limits_{j \in {\mathcal F}'} {H_j}$ and $H_{\min} \leftarrow H_{j_{\min}}$\;
            \If{$H_{\max} > H_{\min}$}
            {
                ${\mathcal A} \leftarrow {\mathcal A} \bigcup \left\{j_{\min}\right\}\setminus\left\{j_{\max}\right\}$\;
            }
            \Else
            {
                {\bf break}\;
            }
        }
        \If{$\log L \ge H\left({\mathcal A}\right)$ or $b = n$}
        {
            \Return ${\mathcal A}$\;
        }

    }
}

\end{algorithm}

To satisfy the entropy constraint \eqref{BGRcre4}, we propose an ECBS algorithm. Given an $\left(N, K\right)$ polar code, the ECBS algorithm first initializes the information set $\mathcal A$ by the MWD sequence $\bf q$. Then, we divide the synthetic channels into $(n+1)$ groups by the partial MWD and gradually swap the information bit and frozen bit to reduce ${H\left({\mathcal A}\right)}$ until the entropy constraint is satisfied.

\begin{figure*}[t]
\setlength{\abovecaptionskip}{0.cm}
\setlength{\belowcaptionskip}{-0.cm}
  \centering{\includegraphics[scale=0.62]{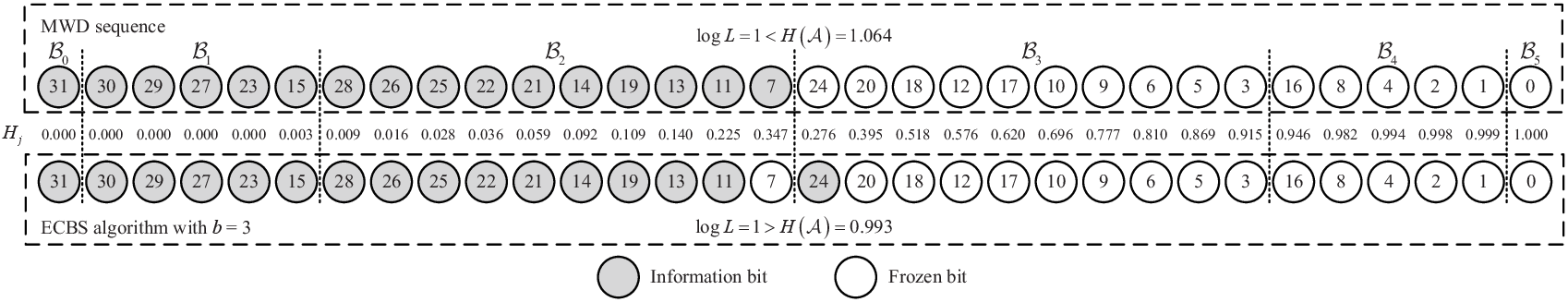}}
  \caption{The information set obtained by ECBS algorithm with $N=32$, $K=16$, $L=2$ and $E_b/N_0=1.25$dB.}\label{FigExampleBGRMWDExample}
  \vspace{0em}
\end{figure*}

In Algorithm \ref{BGRMWDGA}, we first divide $N$ synthetic channels into $(n+1)$ subsets ${\mathcal B}_r, r = 0,1,\cdots,n$ by \eqref{DivideA} and determine the index set ${\mathcal B} = \mathop  \bigcup_{r = 0}^b {{\cal B}_r}$. ${\mathcal B}$ decides the swapping range and $d_{\min}$.
In the swapping process, we gradually increase $b$ to enlarge ${\mathcal B}$ and try not to reduce $d_{\min}$.
Then, we initialize the information set ${\mathcal A}$ by the MWD sequence ${\bf q}$ and determine the swapping frozen set ${\mathcal F}' = {\mathcal B} \setminus {\mathcal A}$.
In ${\mathcal A}$ and ${\mathcal F}'$, we find the information bit with the maximum entropy, i.e., $j_{\max} = {\arg\max}_{j \in {\mathcal A}}{H_j}$ and $H_{\max} = H_{j_{\max}}$, and the frozen bit with the minimum entropy, i.e., $j_{\min} = {\arg \min}_{j \in {\mathcal F}'} {H_j}$ and $H_{\min} = H_{j_{\min}}$, respectively.
If $H_{\max} > H_{\min}$, we swap the information bit and the frozen bit and update the information set as ${\mathcal A} \leftarrow {\mathcal A} \bigcup \left\{j_{\min}\right\}\setminus\left\{j_{\max}\right\}$.
After each bit-swapping, we judge $\mathcal A$ satisfying the entropy constraint $\log{L} \ge {H\left({\mathcal A}\right)}$ or not.
If satisfying, $\mathcal A$ is the information set obtained by the ECBS algorithm. If not, we continue to swap the bits in ${\mathcal A}$ and ${\mathcal F}'$.
When swapping cannot reduce ${H\left({\mathcal A}\right)}$, i.e., $H_{\max} \le H_{\min}$, we update $b \leftarrow b+1$ and repeat the process above until $\log{L} \ge {H\left({\mathcal A}\right)}$ or $b = n$.

{Fig. \ref{FigExampleBGRMWDExample} shows the information sets obtained by the MWD sequence and ECBS algorithm, where $N=32$, $K=16$, $L=2$ and $E_b/N_0=1.25$dB.
In Fig. \ref{FigExampleBGRMWDExample}, the information set ${\mathcal A}$ decided by ${\bf q}$ is
\begin{equation}\label{EqExampleBGR1}
{\mathcal A} = \left\{31,30,29,27,23,15,28,26,25,22,21,14,19,13,11,7\right\}.
\end{equation}
Then, due to
\begin{equation}
\log L=1 < {H\left({\mathcal A}\right)}=1.064,
\end{equation}
we should optimize ${\mathcal A}$ to satisfy \eqref{BGRcre4}.

By Algorithm \ref{BGRMWDGA} with $b = 3$, ${\mathcal A}$ is modified as
\begin{equation}\label{EqExampleBGR2}
{\mathcal A} = \left\{31,30,29,27,23,15,28,26,25,22,21,14,19,13,11,24\right\},
\end{equation}
which satisfies \eqref{BGRcre4}, i.e.,
\begin{equation}
\log L=1 > {H\left({\mathcal A}\right)}=0.993.
\end{equation}

\begin{table}[t]
  \centering
  \caption{The MWD sequence ${\bf q}^{256}$ with $N = 256$.}\label{TabMWDsequenceN256}
  {
\begin{tabular}{|c|c|c|c|c|c|c|c|}
\hline
${\bf q}_0^{31}$& ${\bf q}_{32}^{63}$ & ${\bf q}_{64}^{95}$  & ${\bf q}_{96}^{127}$ & ${\bf q}_{128}^{159}$& ${\bf q}_{160}^{191}$ & ${\bf q}_{192}^{223}$ & ${\bf q}_{224}^{255}$\\ \hline
255 & 119 & 118 & 216 & 105 & 27  & 134 & 80  \\ \hline
254 & 159 & 203 & 226 & 102 & 23  & 81  & 132 \\ \hline
253 & 111 & 179 & 212 & 90  & 15  & 74  & 72  \\ \hline
251 & 95  & 173 & 184 & 60  & 224 & 50  & 48  \\ \hline
247 & 63  & 158 & 225 & 163 & 208 & 44  & 130 \\ \hline
239 & 248 & 117 & 210 & 149 & 200 & 133 & 68  \\  \hline
223 & 244 & 110 & 204 & 142 & 176 & 73  & 40  \\  \hline
191 & 242 & 199 & 180 & 101 & 196 & 70  & 129 \\  \hline
127 & 236 & 171 & 120 & 89  & 168 & 49  & 66  \\  \hline
252 & 241 & 157 & 209 & 86  & 112 & 42  & 36  \\  \hline
250 & 234 & 115 & 202 & 58  & 194 & 28  & 24  \\  \hline
249 & 220 & 109 & 178 & 147 & 164 & 131 & 65  \\  \hline
246 & 233 & 94  & 172 & 141 & 152 & 69  & 34  \\  \hline
245 & 230 & 167 & 116 & 99  & 104 & 41  & 20  \\  \hline
238 & 218 & 155 & 201 & 85  & 193 & 38  & 33  \\  \hline
243 & 188 & 107 & 198 & 78  & 162 & 26  & 18  \\  \hline
237 & 229 & 93  & 177 & 57  & 148 & 67  & 12  \\  \hline
222 & 217 & 62  & 170 & 54  & 100 & 37  & 17  \\  \hline
235 & 214 & 151 & 156 & 139 & 88  & 25  & 10  \\  \hline
221 & 186 & 103 & 114 & 83  & 161 & 22  & 9   \\  \hline
190 & 124 & 91  & 108 & 77  & 146 & 35  & 6   \\  \hline
231 & 227 & 61  & 197 & 53  & 140 & 21  & 5   \\  \hline
219 & 213 & 143 & 169 & 46  & 98  & 14  & 3   \\  \hline
189 & 206 & 87  & 166 & 135 & 84  & 19  & 128 \\  \hline
126 & 185 & 59  & 154 & 75  & 56  & 13  & 64  \\  \hline
215 & 182 & 79  & 113 & 51  & 145 & 11  & 32  \\  \hline
187 & 122 & 55  & 106 & 45  & 138 & 7   & 16  \\  \hline
125 & 211 & 47  & 92  & 30  & 97  & 192 & 8   \\  \hline
207 & 205 & 31  & 195 & 71  & 82  & 160 & 4   \\  \hline
183 & 181 & 240 & 165 & 43  & 76  & 144 & 2   \\  \hline
123 & 174 & 232 & 153 & 29  & 52  & 96  & 1   \\  \hline
175 & 121 & 228 & 150 & 39  & 137 & 136 & 0   \\  \hline
\end{tabular}
}
\vspace{0em}
\end{table}

\section{Performance Evaluation}

In this section, we first provide the performance of polar codes with MWD sequence. Then, the performance of polar codes constructed by the ECBS algorithm is illustrated.

\subsection{Simulation Results of MWD Sequence}

In this subsection, the MWD sequence obtained by Algorithm \ref{SequenceDesign} is first provided in Table \ref{TabMWDsequenceN256}. Then, the BLER performance of polar codes constructed by the MWD sequence is shown. The polar codes are decoded by SCL decoding.

\begin{figure}[t]
\setlength{\abovecaptionskip}{0.cm}
\setlength{\belowcaptionskip}{-0.cm}
  \centering{\includegraphics[scale=0.69]{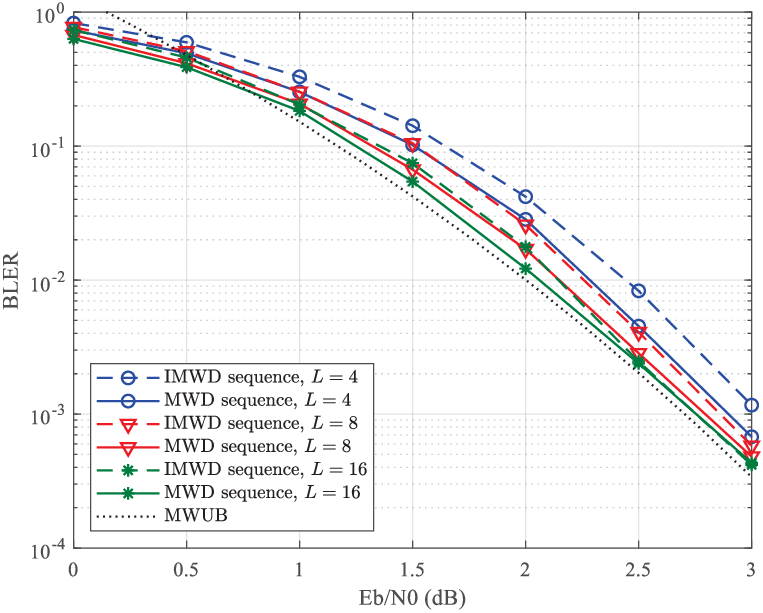}}
  \caption{The BLER performance of polar codes with $N = 256$ and $K = 128$, where IMWD sequence means the inverse MWD sequence with the opposite criterion 3).}\label{FigN256MWDinv}
  \vspace{0em}
\end{figure}

\begin{figure}[t]
\setlength{\abovecaptionskip}{0.cm}
\setlength{\belowcaptionskip}{-0.cm}
  \centering{\includegraphics[scale=0.69]{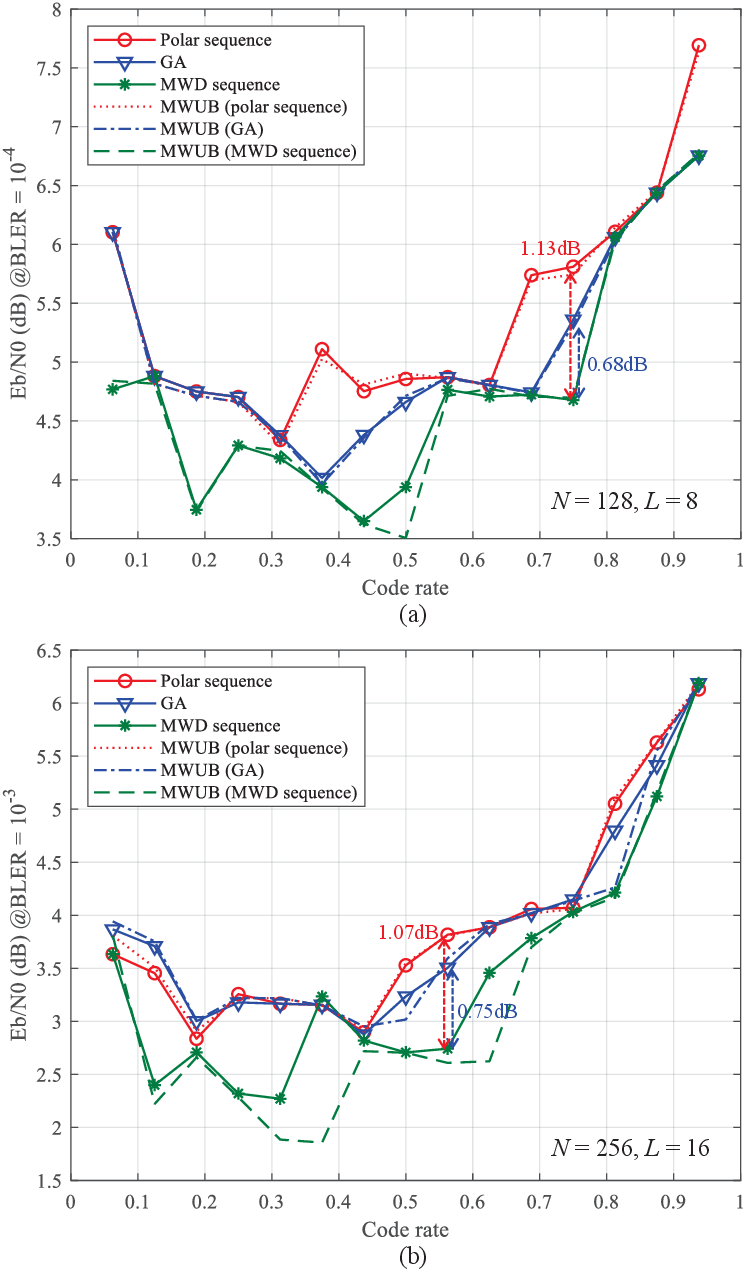}}
  \caption{Fig. \ref{FigTargetSNR}(a) and Fig. \ref{FigTargetSNR}(b) illustrate the minimum required SNRs of polar codes decoded by SCL decoding with $L = 8$ and $L = 16$ to achieve BLER $\le 10^{-4}$ and BLER $\le 10^{-3}$ under the AWGN channel with $N = 128$ and $N = 256$, respectively.}\label{FigTargetSNR}
  \vspace{0em}
\end{figure}

Fig. \ref{FigN256MWDinv} provides the BLER performance of polar codes with $N=256$ and $K=128$, where IMWD sequence means the inverse MWD sequence with the opposite criterion 3), i.e., when the synthetic channels have the identical $d_i$ and $A_i$, $W_N^{(i)}$ with less $i$ has larger $U\left(W_N^{\left(i\right)}\right)$ than others.
The MWD of polar codes with the two construction methods is identical, i.e., $d_{\min} = 16$ and $A_{d_{\min}} = 42288$.
We can observe that the performance of IMWD sequence is worse than that of MWD sequence for $L=4,8$ and $16$ with the performance gaps $0.17$dB, $0.11$dB and $0.08$dB at BLER $10^{-2}$, respectively.
Then, for $L = 16$, the performance of the two sequences is close to the MWUB in the high SNR region.
Thus, polar codes with the empirically designed criterion 3) have better performance than the opposite criterion 3).

Fig. \ref{FigTargetSNR} shows the minimum required SNRs of polar codes constructed by polar sequence \cite{3GPP_5G_polar}, GA algorithm \cite{GA} and MWD sequence to achieve the target BLER under the AWGN channel with the code rate range $R = 0.0625 \sim 0.9375$. The required SNRs of MWUB equal to the target BLER are also provided.
The MWD sequence in Table \ref{TabMWDsequenceN256} is used to construct polar codes.

Fig. \ref{FigTargetSNR}(a) provides the required SNRs of SCL decoding with $N=128$, $L = 8$ and BLER $\le 10^{-4}$.
In Fig. \ref{FigTargetSNR}(a), we observe that the required SNRs of polar sequence \cite{3GPP_5G_polar}, GA algorithm \cite{GA} and MWD sequence are close to the required SNRs of corresponding MWUB.
Then, since the polar codes constructed by the MWD sequence have the optimum MWUB, the required SNRs are less than or equal to those of the polar sequence and GA algorithm.
Specifically, the MWD sequence shows about $0.68$dB and $1.13$dB SNR gaps at $R = 0.75$ compared with the GA algorithm and the polar sequence, respectively.
Thus, MWD sequence can improve the performance of polar codes with medium list size for short code length.
Next, there is about $0.45$dB SNR gap between the MWD sequence and the corresponding MWUB at $R = 0.5$, which means SCL decoding with list size larger than $8$ can provide less minimum required SNR to achieve BLER $\le 10^{-4}$.
Fig. \ref{FigTargetSNR}(b) provides the required SNRs of SCL decoding with $N=256$, $L = 16$ and BLER $\le 10^{-3}$.
In Fig. \ref{FigTargetSNR}(b), the optimum MWUB of the MWD sequence also leads to the corresponding minimum required SNRs less than or equal to those of GA algorithm and polar sequence.
The MWD sequence has about $0.75$dB and $1.07$dB SNR gaps at $R = 0.5625$ compared with the GA algorithm and the polar sequence, respectively.
Compared with Fig. \ref{FigTargetSNR}(a), more cases show SNR gaps between the MWD sequence and the MWUB.
Specifically, the SNR gaps are $0.38$dB, $1.37$dB and $0.83$dB at $R = 0.3125$, $R = 0.375$ and $R = 0.625$, respectively.
The reason is that as the code length increases, the polar codes constructed by the MWD sequence cannot satisfy \eqref{BGRcre4}, which means the required SNRs of the MWD sequence with limited list size deviate from the ML performance. Thus, increasing list size can further improve the performance of MWD sequence to approach the ML performance.

\begin{figure}[t]
\setlength{\abovecaptionskip}{0.cm}
\setlength{\belowcaptionskip}{-0.cm}
  \centering{\includegraphics[scale=0.68]{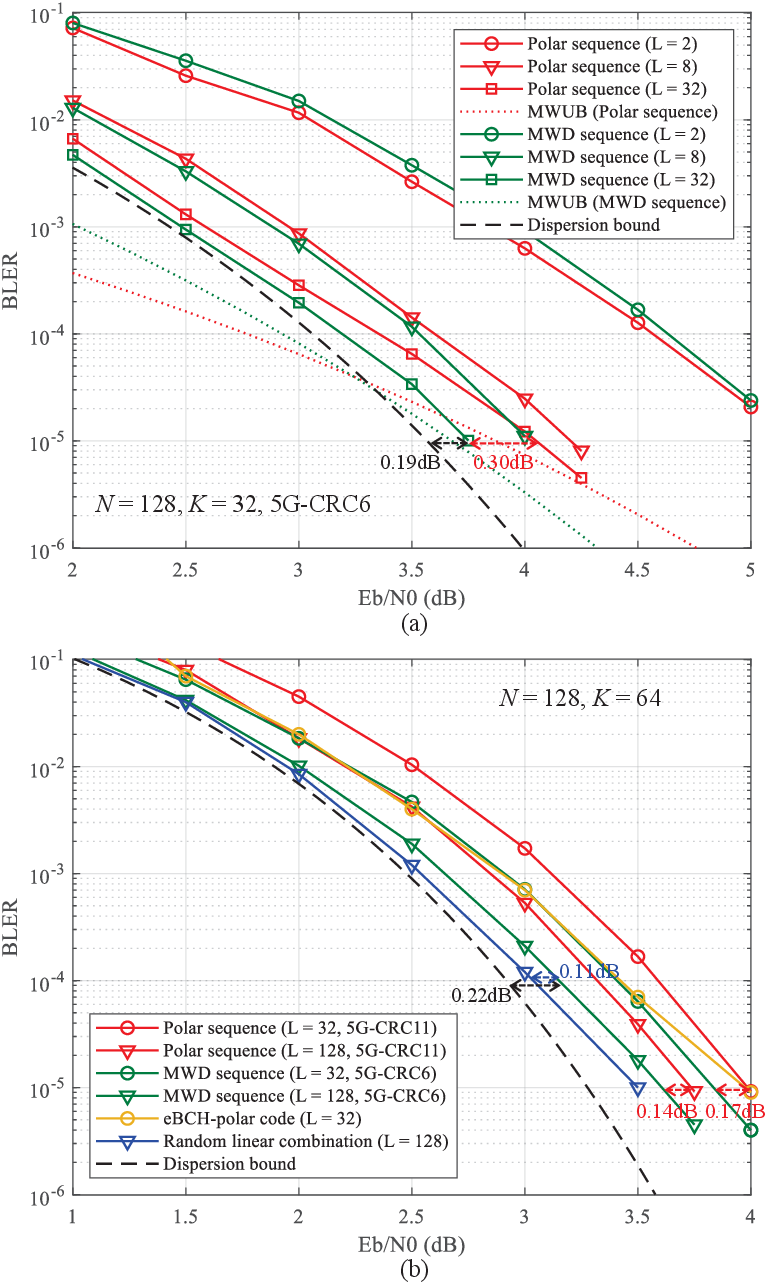}}
  \caption{Fig. \ref{FigN128CRC6}(a) and Fig. \ref{FigN128CRC6}(b) show the BLER performance of $(128,32)$ and $(128,64)$ CRC-polar codes, respectively.}\label{FigN128CRC6}
  \vspace{0em}
\end{figure}

\begin{figure}[t]
\setlength{\abovecaptionskip}{0.cm}
\setlength{\belowcaptionskip}{-0.cm}
  \centering{\includegraphics[scale=0.68]{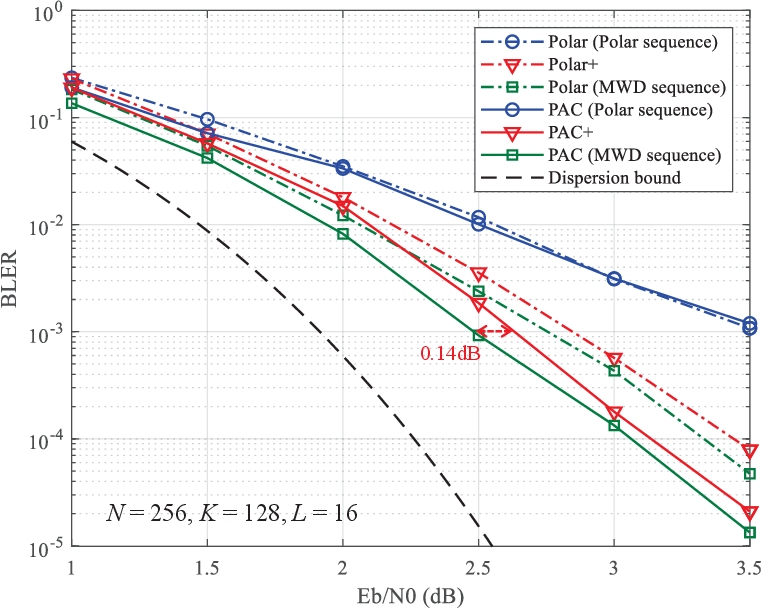}}
  \caption{The BLER performance of polar codes and PAC codes constructed by different methods with $N = 256$, $K = 128$ and $L = 16$.}\label{FigN256}
  \vspace{-0em}
\end{figure}

\begin{table}[t]
\centering
  \caption{The MWDs of CRC-polar codes constructed by polar sequence and MWD sequence with $N = 128$ and 5G-CRC6.}\label{TabMWDN128CRC6}
\begin{tabular}{|c|c|cc|cc|}
\hline
    &     & \multicolumn{2}{c|}{Polar sequence} & \multicolumn{2}{c|}{MWD sequence} \\ \hline
$N$   & $K$   & \multicolumn{1}{c|}{$d_{\min}$}  & $A_{d_{\min}}$  & \multicolumn{1}{c|}{$d_{\min}$} & $A_{d_{\min}}$ \\ \hline
128 & 8   & \multicolumn{1}{c|}{48}    & 6      & \multicolumn{1}{c|}{48}   & 6     \\ \hline
128 & 16  & \multicolumn{1}{c|}{32}    & 20     & \multicolumn{1}{c|}{32}   & 14    \\ \hline
128 & 24  & \multicolumn{1}{c|}{24}    & 4      & \multicolumn{1}{c|}{32}   & 151   \\ \hline
128 & 32  & \multicolumn{1}{c|}{16}    & 2      & \multicolumn{1}{c|}{24}   & 164   \\ \hline
128 & 40  & \multicolumn{1}{c|}{16}    & 46     & \multicolumn{1}{c|}{16}   & 46    \\ \hline
128 & 48  & \multicolumn{1}{c|}{16}    & 293    & \multicolumn{1}{c|}{16}   & 165   \\ \hline
128 & 56  & \multicolumn{1}{c|}{12}    & 8      & \multicolumn{1}{c|}{16}   & 706   \\ \hline
128 & 64  & \multicolumn{1}{c|}{8}     & 19     & \multicolumn{1}{c|}{12}   & 34    \\ \hline
128 & 72  & \multicolumn{1}{c|}{8}     & 51     & \multicolumn{1}{c|}{8}    & 56    \\ \hline
128 & 80  & \multicolumn{1}{c|}{8}     & 199    & \multicolumn{1}{c|}{8}    & 199   \\ \hline
128 & 88  & \multicolumn{1}{c|}{8}     & 1429   & \multicolumn{1}{c|}{8}    & 746   \\ \hline
128 & 96  & \multicolumn{1}{c|}{6}     & 16     & \multicolumn{1}{c|}{8}    & 6227  \\ \hline
128 & 104 & \multicolumn{1}{c|}{4}     & 84     & \multicolumn{1}{c|}{4}    & 328   \\ \hline
128 & 112 & \multicolumn{1}{c|}{4}     & 1132   & \multicolumn{1}{c|}{4}    & 1132  \\ \hline
128 & 120 & \multicolumn{1}{c|}{2}     & 132    & \multicolumn{1}{c|}{2}    & 132   \\ \hline
\end{tabular}
\vspace{0em}
\end{table}

Fig. \ref{FigN128CRC6} shows the BLER performance of CRC-polar codes with $N = 128$, where 5G-CRC6 polynomial is $g(x) = x^6 + x^5 + 1$ and 5G-CRC11 polynomial is $g(x) = x^{11} + x^{10} + x^{9} + x^{5} + 1$.
The normal approximation of the finite blocklength capacity is called the dispersion bound and obtained in \cite{FiniteBlocklengthCapacity}.
The MWUB is calculated by \eqref{union_bound_a} with the MWD in Table \ref{TabMWDN128CRC6}.
In Fig. \ref{FigN128CRC6}(a) with $R = 0.25$, we observe that the BLER performance of MWD sequence with $L=2$ is worse than that of polar sequence.
Then, as $L$ increases, the performance of MWD sequence is better than that of polar sequence.
Specifically, the performance of MWD sequence with $L=32$ approaches the MWUB in the high SNR region and shows about $0.19$dB and $0.30$dB performance gaps at BLER $10^{-5}$ compared with the dispersion bound and the performance of polar sequence, respectively.
The reason is that the $\left(128,32\right)$ CRC-polar code constructed by MWD sequence has better MWD shown in Table \ref{TabMWDN128CRC6} than that constructed by polar sequence.
In Fig. \ref{FigN128CRC6}(b) with $R = 0.5$, we observe that the performance of MWD sequence with $L = 32$ is better than the optimized eBCH-polar code in \cite{yuan2019polar} at $E_b/N_0 = 4$dB.
The CRC-polar codes constructed by MWD sequence shows $0.17$dB and $0.14$dB performance gains with $L=32$ and $L=128$ at BLER $10^{-5}$ compared with the polar sequence, respectively.
The performance of MWD sequence with $L = 128$ has about $0.11$dB and $0.22$dB performance gaps at BLER $10^{-4}$ compared with the random linear combination \cite{InformationSCL} and dispersion bound, respectively.
Though the performance of the MWD sequence is worse than that of the random linear combination, the MWD sequence is nested and can be used in practical easily.

Fig. \ref{FigN256} illustrates the BLER performance of polar codes and PAC codes constructed by different methods with $N = 256$, $K = 128$ and $L = 16$, where polar+ and PAC+ are the construction methods in \cite{NondreasingCodeMWD} for polar codes and PAC codes, respectively.
The precoding in all PAC codes is performed with polynomial coefficients $[1,0,1,1,0,1,1]$, which is identical to Ar{\i}kan's PAC codes \cite{PAC}.
PAC codes are decoded by the list decoding in \cite{PAClist}.
In Fig. \ref{FigN256}, we observe that both the polar code and the PAC code constructed by the MWD sequence have the optimum performance among the MWD sequence, the polar sequence and the method in \cite{NondreasingCodeMWD}.
Specifically, the PAC code constructed by MWD sequence has about $0.14$dB performance gain at BLER $10^{-3}$ compared with PAC+.
Hence, optimizing the MWD can improve the performance of concatenated polar codes.

\subsection{Simulation Results of ECBS Algorithm}

In this subsection, we provide the BLER performance of polar codes constructed by the proposed ECBS algorithm with different list sizes.

\begin{figure}[t]
\setlength{\abovecaptionskip}{0.cm}
\setlength{\belowcaptionskip}{-0.cm}
  \centering{\includegraphics[scale=0.68]{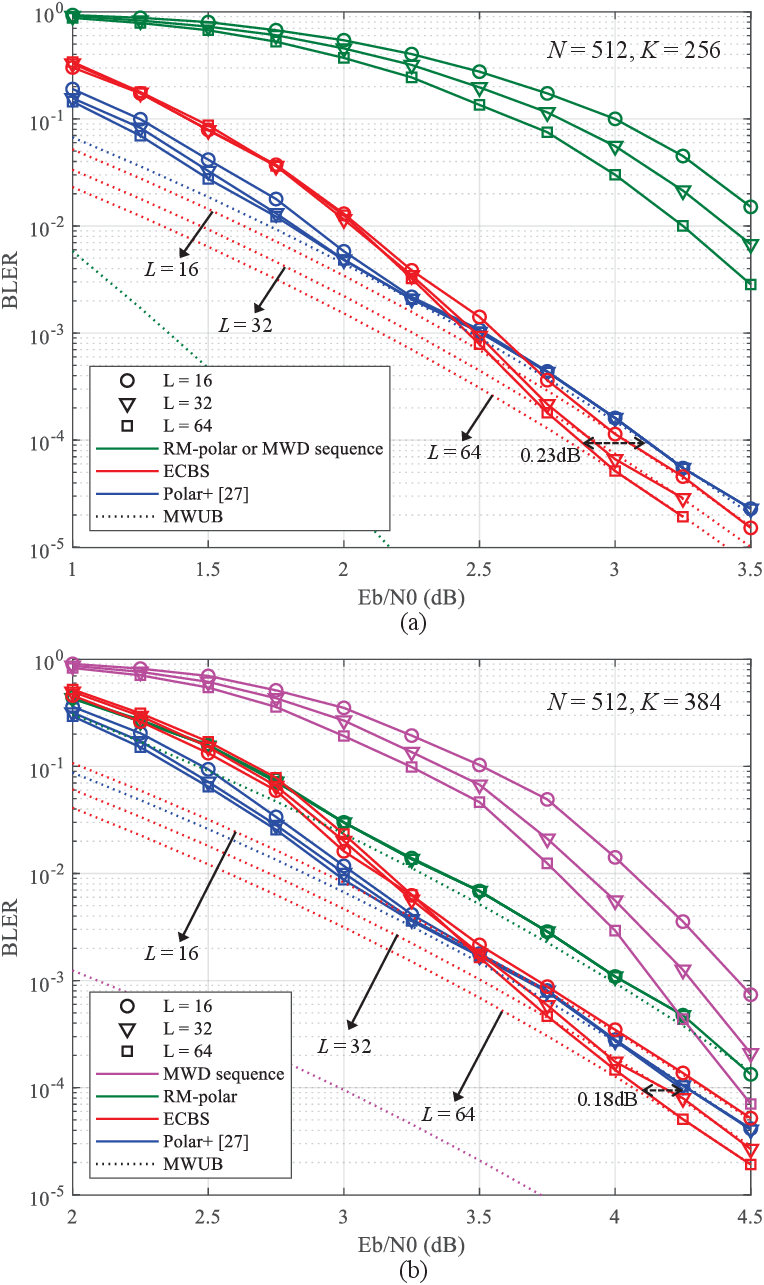}}
  \caption{Fig. \ref{FigN512}(a) and Fig. \ref{FigN512}(b) illustrate the BLER performance of $\left(512,256\right)$ and $\left(512,384\right)$ polar codes with different construction methods, respectively.}\label{FigN512}
  \vspace{-0em}
\end{figure}

\begin{table}[t]
\centering
  \caption{The MWDs of polar codes with different construction methods, where the designed $E_b/N_0$ of the ECBS algorithm are $1.5$dB and $2.57$dB for $\left(512,256\right)$ and $\left(512,384\right)$ polar codes, respectively.}\label{TabMWDN512}
\begin{tabular}{|c|cc|cc|}
\hline
                                                        & \multicolumn{2}{c|}{(512, 256)}      & \multicolumn{2}{c|}{(512, 384)}   \\ \hline
                                                        & \multicolumn{1}{c|}{$d_{\min}$} & $A_{d_{\min}}$    & \multicolumn{1}{c|}{$d_{\min}$} & $A_{d_{\min}}$ \\ \hline
\begin{tabular}[c]{@{}c@{}}ECBS ($L = 16$)\end{tabular} & \multicolumn{1}{c|}{16}   & 14432    & \multicolumn{1}{c|}{8}    & 16576 \\ \hline
\begin{tabular}[c]{@{}c@{}}ECBS ($L = 32$)\end{tabular}  & \multicolumn{1}{c|}{16}   & 9312     & \multicolumn{1}{c|}{8}    & 9408  \\ \hline
\begin{tabular}[c]{@{}c@{}}ECBS ($L = 64$)\end{tabular}  & \multicolumn{1}{c|}{16}   & 6240     & \multicolumn{1}{c|}{8}    & 6336  \\ \hline
Polar+\cite{NondreasingCodeMWD}                                                  & \multicolumn{1}{c|}{16}   & 18720    & \multicolumn{1}{c|}{8}    & 13504 \\ \hline
RM-polar                                      & \multicolumn{1}{c|}{32}   & 52955952 & \multicolumn{1}{c|}{8}    & 49344 \\ \hline
MWD sequence                                      & \multicolumn{1}{c|}{32}   & 52955952 & \multicolumn{1}{c|}{8}    & 192 \\ \hline
\end{tabular}
\vspace{0em}
\end{table}

Fig. \ref{FigN512} provides the BLER performance of $\left(512,256\right)$ and $\left(512,384\right)$ polar codes with different construction methods.
The MWDs of the polar codes in Fig. \ref{FigN512} are shown in Table \ref{TabMWDN512}.
The designed $E_b/N_0$ of the ECBS algorithm are $1.5$dB and $2.57$dB for $\left(512,256\right)$ and $\left(512,384\right)$ polar codes, respectively, and the information sets are provided in Appendix.
In Fig. \ref{FigN512}, the BLER performance of polar codes constructed by the ECBS algorithm approaches its MWUB in the high SNR region and the corresponding $A_{d_{\min}}$ is reduced as $L$ increases, which leads to the better performance with larger $L$.
In Fig. \ref{FigN512}(a), the information sets obtained by RM-polar \cite{RMpolarLi} and MWD sequence are identical.
Then, the RM-polar and the MWD sequence have the optimum MWUB among these construction methods but a large performance gap between the MWUB and the performance, since the entropy constraint is unsatisfied.
In Fig. \ref{FigN512}(b), the RM-polar satisfies the entropy constraint and its performance approaches the MWUB in the high SNR region.
However, the MWD sequence shows a large performance gap between the performance and the MWUB.
Hence, the entropy constraint is a metric to evaluate whether the performance of polar codes with limited list size under SCL decoding can approach the ML performance or not and the proposed ECBS algorithm can improve the MWD of polar codes to show better BLER performance as $L$ increases.
Finally, compared with the construction method in \cite{NondreasingCodeMWD}, the polar codes constructed by the ECBS algorithm have less $A_{d_{\min}}$ and show better performance in the high SNR region. Specifically, the $\left(512,256\right)$ and $\left(512,384\right)$ polar codes constructed by the ECBS algorithm with $L = 64$ have about $0.23$dB and $0.18$dB performance gains at BLER $10^{-4}$, respectively.

\section{Conclusion}

In this paper, the construction methods based on MWD are proposed to improve the performance of polar codes under SCL decoding. We first prove that the ML performance can approach the MWUB as the SNR goes to infinity. Then, we design the ordered and nested MWD sequence to apply fast construction without channel information and we prove that the MWD sequence is the optimum sequence evaluated by MWUB for polar codes obeying PO. Finally, we introduce the entropy constraint and propose the ECBS algorithm to construct polar codes. The simulation results show that the MWD sequence is suitable for constructing polar codes with short code length and the proposed ECBS algorithm can improve the performance of polar codes as the list size increases.

\vspace{-0em}

\appendix

The information sets for the $\left(512, 256\right)$ polar codes with $L = 16$, $32$ and $64$ in Fig. \ref{FigN512}(a) are as follows:
\begin{equation}
{\mathcal A}_{\tt L16} = {\mathcal A}_{\tt L32} \bigcup \left\{344,396,402\right\} \setminus \left\{109,167,271\right\},
\end{equation}
\begin{equation}
{\mathcal A}_{\tt L32} = {\mathcal A}_{\tt L64} \bigcup \left\{356,417\right\} \setminus \left\{94,155\right\},
\end{equation}
\begin{equation}
\begin{aligned}
&{\mathcal A}_{\tt L64} ={\mathcal A}_{\tt MWD} \bigcup \left\{
\begin{aligned}
&360,368,404,408,418,420,424,\\&432,449,450,452,456,464,480
\end{aligned}
\right\} \\
&\setminus \left\{
31,47,55,59,61,62,79,87,91,93,103,107,143,151
\right\},
\end{aligned}
\end{equation}
and the information sets for the $\left(512, 384\right)$ polar codes with $L = 16$, $32$ and $64$ in Fig. \ref{FigN512}(b) are as follows:
\begin{equation}
{\mathcal A}_{\tt L16} = {\mathcal A}_{\tt L32} \bigcup \left\{208,296,385\right\} \setminus \left\{46,53,75\right\},
\end{equation}
\begin{equation}
{\mathcal A}_{\tt L32} = {\mathcal A}_{\tt L64} \bigcup \left\{304,324\right\} \setminus \left\{51,71\right\},
\end{equation}
\begin{equation}
\begin{aligned}
{\mathcal A}_{\tt L64} =& {\mathcal A}_{\tt MWD} \bigcup \left\{
224,328,336,352,386,388,392,400
\right\} \\
&\setminus \left\{
15,23,27,29,30,39,43,45
\right\},
\end{aligned}
\end{equation}
where ${\mathcal A}_{\tt MWD}$ is the information set obtained by the MWD sequence.

\bibliographystyle{IEEEtran}
\bibliography{IEEEabrv,myrefs}

\end{document}